\documentclass[aps,pra,10pt,superscriptaddress,twocolumn,nofootinbib]{revtex4-2}
\usepackage{graphicx}
\usepackage{amsmath,amsthm,amssymb,amsfonts,dsfont,mathtools,csquotes,xcolor,soul}
\usepackage{mdframed}
\usepackage{orcidlink}

\newtheorem{definition}{Definition}

\newtheorem{assumption}{Assumption}

\newtheorem{thm}{Theorem}
\newtheorem{Lemma}{Lemma}

\usepackage{enumitem}

\newcommand{\ket}[1]{\left| #1 \right\rangle}

\usepackage{hyperref}
\newcommand{\etal}{\textit{et al} }

\begin{document}

\title{Limits of Absoluteness of Observed Events in Timelike Scenarios: A No-Go Theorem}

\author{Sumit Mukherjee}
\email{mukherjeesumit93@gmail.com}
\affiliation{Department of Physics, Indian Institute of Technology Dharwad, Dharwad, Karnataka 580011, India}
\affiliation{Department of Physics of Complex Systems, S. N. Bose National Center for Basic Sciences, Block JD, Sector III, Salt Lake, Kolkata 700106, India}

\author{Jonte R. Hance\,\orcidlink{0000-0001-8587-7618}}
\email{jonte.hance@newcastle.ac.uk}
\affiliation{School of Computing, Newcastle University, 1 Science Square, Newcastle upon Tyne, NE4 5TG, UK}
\affiliation{Quantum Engineering Technology Laboratories, Department of Electrical and Electronic Engineering, University of Bristol, Woodland Road, Bristol, BS8 1US, UK}

\begin{abstract}
Wigner’s Friend–type paradoxes challenge the assumption that events are absolute — that when we measure a system, we obtain a single result, which is not relative to anything or anyone else. These paradoxes highlight the tension between quantum theory and our intuitions about reality being observer-independent. Building on a recent result that developed these paradoxes into a no-go theorem, namely the \emph{Local Friendliness Theorem}, we introduce the \emph{Causal Time-Symmetric Friendliness Paradox}, a time-ordered analogue of it. In this framework, we replace the usual locality assumption with \emph{Axiological Time Symmetry} (ATS), and show that, when combined with the assumptions of \emph{Absoluteness of Observed Events} (AOE), \emph{No Retrocausality} (NRC), and \emph{Screening via Pseudo Events} (SPE), we obtain a Causal Time-Symmetric Friendliness inequality. We then show that quantum mechanics violates this inequality and is therefore incompatible with at least one of these assumptions. To probe which assumption might be incompatible, we then examine whether AOE in its entirety is essential for this no-go result. We propose a weaker, operational form of AOE that still leads to inequalities which quantum mechanics violates. This result shows that even under relaxed assumptions, quantum theory resists reconciliation with classical notions of absolute events, reinforcing the foundational significance of Wigner’s Friend-type paradoxes in timelike scenarios.
\end{abstract}

\maketitle

\section{Introduction}

One of the most unsettling quantum paradoxes proposed so far is that of Wigner's Friend, which asks the question of what happens when, as should be valid in standard quantum mechanics, we treat human observers as quantum systems \cite{wigner1967remarks}. While originally simply troubling to our intuitions, in the same way as Schr\"odinger's Cat \cite{Schrodinger1935Cat} or the EPR paradox \cite{Einstein1935EPR}, recent work has shown that, by combining an extension of the Wigner's Friend scenario with certain intuitive assumptions (that any event an observer sees is real and absolute, and so alternative, or ``parallel'', versions of such observations do not occur simultaneously; that our choices are free and not influenced by other factors; and that no information can travel faster than $c$, the speed of light in a vacuum), we obtain predictions for the values of certain observables in the scenario which differ from the values predicted by standard quantum mechanics \cite{Bong2020ExtendedWF}. While these scenarios are practically impossible to perform in the real world, given they involve being able to unitarily-undo the process of a human observer measuring a quantum system, it is deeply interesting that we are able to obtain such a contradiction, especially given the assumptions are arguably weaker than those used in similar no-go results, such as Bell's Theorem \cite{Bell1964,Clauser1969CHSH,bell1971introduction,Freedman1972,Clauser1974CH74,Wiseman2017Causarum,Wharton2020Retrocausality,Hance2022ComNatPhys,Hance2024CFRestrictions} or the Kochen-Specker Theorem \cite{Kochen1968,Budroni2022ContextualityReview}.

However, it is still an open question whether any of the assumptions behind the so-called Local Friendliness inequalities~\cite{Bong2020ExtendedWF} in this Extended Wigner's Friend scenario can be weakened any further. In this paper, we show that the assumptions can be weakened, in two ways.

Firstly, we show that we can replace the Locality assumption (the assumption that no information travels faster than $c$) in this standard Extended Wigner's Friend scenario with an assumption of time-symmetry in a time-ordered version of the scenario - i.e., by making the scenario causally rather than spatially separated, we can weaken the assumptions by just assuming the choices in the first part of the experiment always affect the probabilities of the second part of the experiment in the same way, regardless of which part of the experiment is done first and which is done second.

Secondly, we show that in this time-ordered scenario, we can also weaken the assumption of Absoluteness of Observed Events (that any event an observer sees is real and absolute, meaning no ``parallel'' versions of such an observation exist with different measurement results), by an operational analogy of the weakening of the ontological assumption in Price's no-go result \cite{Price2012Retro} by Leifer and Pusey \cite{leifer2017}.

One counterargument against the intuitiveness and viability of the assumptions used to generate the Local Friendliness inequalities is that Absoluteness of Observed Events, and Unitary Undoing of an Observer's Measurement, are fundamentally incompatible, regardless of any other assumptions. Showing the inequality still holds for the weakened form of AOE implies that this counterargument fails to account for the entirety of the peculiarity of quantum mechanics' ability to violate these Local (or here, Causal) Friendliness inequalities.

This paper is laid out as follows. In Section~\ref{Sect:WF}, we first go through the original (Section~\ref{Subsect:OWF}) and extended (Section~\ref{Subsect:EWF}) Wigner's Friend thought experiments, with special focus for the Extended Wigner's Friend thought experiment on the assumptions required to achieve a contradiction with standard quantum mechanics.

In Section~\ref{Sect:CFP}, we extend these results to a new paradox: the Causal Time-Symmetric Friendliness Paradox. In Section~\ref{Subsect:NewScenario}, we describe the (hypothetical) experimental scenario for the Causal Time-Symmetric Friendliness Paradox to occur, before in Section~\ref{Subsect:CFNogo} giving the assumptions required for such a scenario to give predictions contradicting standard quantum mechanics (a contradiction we formally prove in the Appendix).

In Section~\ref{Sect:CFPWeakenedAOE}, we expand on our result from Section~\ref{Sect:CFP}, by seeing to what extent we can weaken the Absoluteness of Observed Events assumption common to both our Causal Time-Symmetric Friendliness Paradox and the Extended Wigner's Friend Paradoxes. We do this by looking at a result by Leifer and Pusey \cite{leifer2017}, where they weakened the $\psi$-ontic reproducibility assumption in Price's 2007 no-go result on time-symmetry and retrocausality \cite{Price2012Retro}, to instead just require $\lambda$-mediation. In Section~\ref{Subsect:lambdafrompsi}, we review how Leifer and Pusey weakened the $\psi$-ontic reproducibility assumption into $\lambda$-mediation. In Section~\ref{Subsect:WeakeningAOE}, we then analogise this weakening at an operational, rather than ontological, level, to enable us to weaken the Absoluteness of Observed Events assumption while still being able to generate an inequality that quantum mechanics violates. However, in Section~\ref{Subsect:impossAPE}, we then show that we cannot weaken this assumption any further (i.e., specifically remove the Absoluteness of the Pseudo Events - the Friends measuring their system - which may or may not be unitarily ``rewound'' by the Wigners) without losing the ability to generate such a contradiction, with either quantum correlations or even Popescu and Rohrlich's Boxworld's correlations \cite{Popescu1994}.

Finally, in Section~\ref{Sect:Discussions}, we discuss our results: in Section~\ref{Subsect:Comparisons}, we compare our results to previous work; and in Section~\ref{Subsect:Implications}, we discuss the implications of our work for the study of Wigner's Friend-style paradoxes, and quantum foundations more broadly.

\section{Wigner's Friend Scenarios and their implications}\label{Sect:WF}

\subsection{The Wigner's Friend paradox}\label{Subsect:OWF}

One of the peculiarities implied by the measurement problem \cite{Maudlin1995,Sclosshauer2005measurementProbem, Hance2022MeasProb} was illustrated by Wigner in his eponymous thought experiment, the Wigner's Friend paradox  \cite{wigner1967remarks}. The paradox involves a scenario with two observers, Wigner and his Friend, each capable of performing specific operations on the quantum systems assigned to them. While the Friend remains inside a laboratory $\mathbf{L}$, isolated from the external environment and measuring a quantum system $\mathbf{S}$ contained within, Wigner stays outside and treats the entire lab as a single quantum system.

The quantum system $\mathbf{S}$ inside the lab is described by a wave function
\begin{equation}
\ket{\psi}_{S}=\frac{\ket{\uparrow}_{S}+\ket{\downarrow}_{S}}{\sqrt{2}},
\end{equation}
where $\{\ket{\uparrow}_{S},\ket{\downarrow}_{S}\}$ are eigenstates of spin of a system in some arbitrary direction. Therefore, if the Friend measures the system, she gets an outcome corresponding to either $\ket{\uparrow}_{S}$ \textit{or} $\ket{\downarrow}_{S}$. However, from Wigner's perspective, the entire laboratory behaves as a large quantum system. This includes not only the system being measured by the Friend, but also the measurement device, whose corresponding quantum state can be either $\ket{\uparrow}_{D}$ or $\ket{\downarrow}_{D}$ depending on the state of the system, as well as the Friend, whose quantum state can similarly be $\ket{\uparrow}_{F}$ or $\ket{\uparrow}_{F}$. Altogether, there exists a linear map $f$ from the wave function of the measured system to the combined wave function of the full laboratory, given by:
\begin{equation}
  f:\left \{
    \begin{aligned}
    &\ket{\uparrow}_{S} \rightarrow \ket{\uparrow}_{S} \otimes \ket{\uparrow}_{D} \otimes \ket{\uparrow}_{F} \equiv \ket{\uparrow}_{L} \\
    & \ket{\downarrow}_{S} \rightarrow \ket{\downarrow}_{S} \otimes \ket{\downarrow}_{D} \otimes \ket{\downarrow}_{F} \equiv \ket{\downarrow}_{L}
\end{aligned}
\right.  
\end{equation}

Therefore, if the initial state of the system is $\ket{\psi}_{S}$, then after the Friend's measurement, the wave function of the lab (from Wigner's perspective) takes the form,
\begin{equation}
\label{wigner_wavefuction}
    \ket{\psi}_{W}= \frac{\ket{\uparrow}_{L}+\ket{\downarrow}_{L}}{\sqrt{2}}
\end{equation}

As the lab is isolated from the environment, Wigner has no way to know the result of the measurement that the Friend has already performed. Therefore, to Wigner, the state of the system remains as a probabilistic mixture of $\ket{\uparrow}_{S}$ and $\ket{\downarrow}_{S}$, which he gets by tracing out the device and Friend's wave function. In contrast, the Friend has already obtained a definite outcome (so to them, the state of the system is either in the pure state $\ket{\uparrow}_{S}$ or the pure state $\ket{\downarrow}_{S}$). This apparent contradiction between Wigner’s and his Friend’s perceptions (and descriptions) of a single event highlights the incommensurability of quantum theory in accounting for observer-independent facts.

However, on opening the Friend's lab, Wigner (and the wider environment) also becomes entangled with the lab, and so the Friend's system. This means, on opening the lab, the state of the system (from Wigner's perspective) also becomes a single eigenstate (either the pure state $\ket{\uparrow}_{S}$, or the pure state $\ket{\downarrow}_{S}$). Therefore, despite the difference in the state that Wigner and the Friend each associate to the Friend's measured system while the lab is closed, there is no point at which these two different states give predictions which could observably contradict. One could simply dismiss any apparent contradiction by asserting that \textit{unperformed experiments have no results} \cite{peresUnperformed}. This led to the question of what assumptions would be necessary for us to adapt the scenario in such a way that experimentally-testable predictions derived from these assumptions contradicted the predictions of quantum theory.

\subsection{Extended Wigner Friend Scenarios}\label{Subsect:EWF}

\begin{center}
    \begin{figure}
     \includegraphics[height=50mm, width=85mm,scale=4]{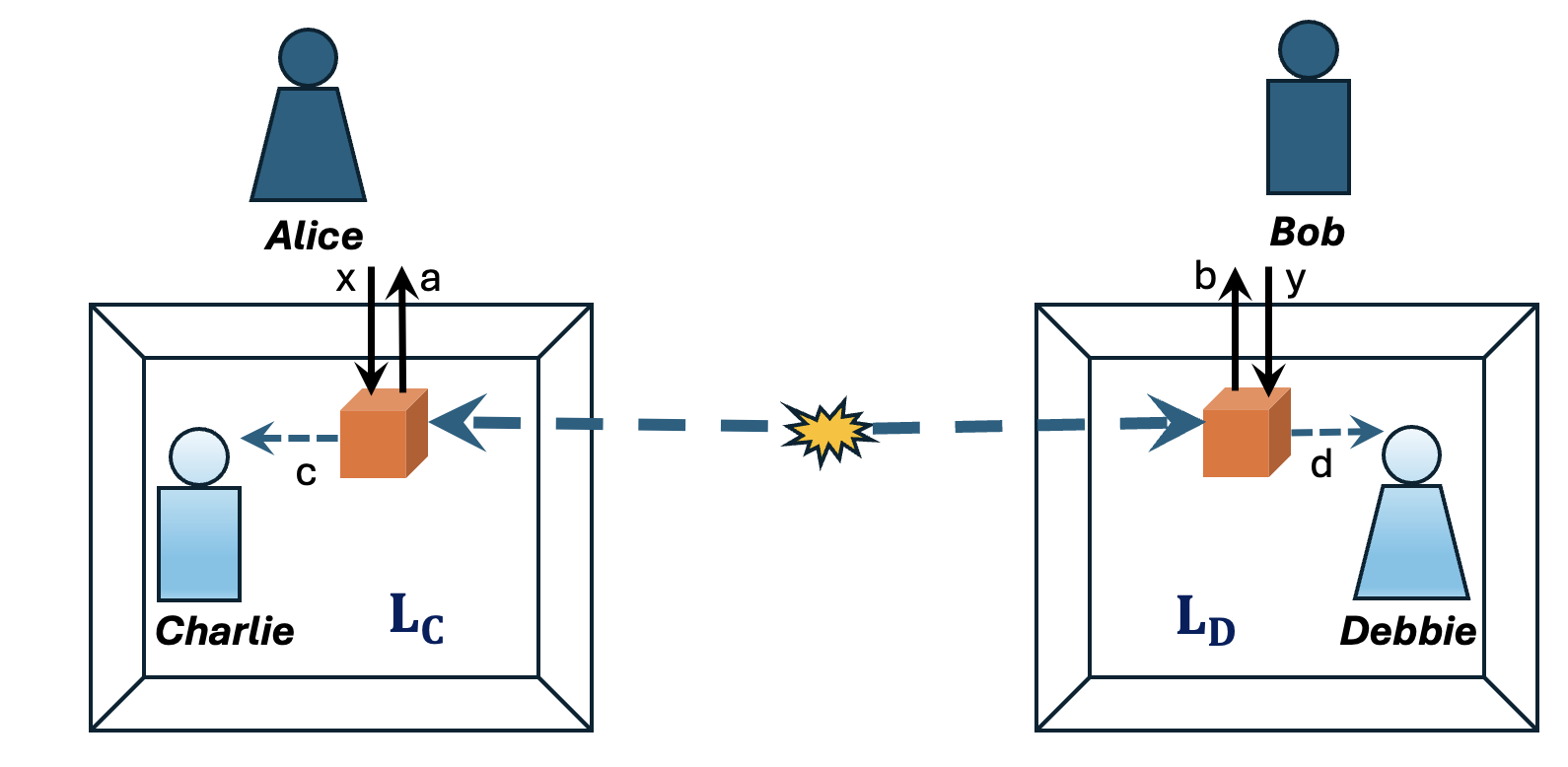}
    \caption{The extended Wigner's Friend Scenario for the Local Friendliness no-go theorem.}
    \label{fig:LFEWF}
\end{figure}
\end{center}

Recently, a flurry of works has emerged \cite{brukner2015quantummeasurementproblem,Brukner2018Observer,Healey2018-ur,Frauchiger2018Consistently,Lazarovici2019-ol,sarkar2023,Allard_Guerin2021-yc,marek21,santo25}, building upon the original Wigner’s Friend paradox \cite{wigner1967remarks} and its subsequent generalisations \cite{Deutsch1985}, and exploring various foundational issues related to these scenarios. Frauchiger and Renner \cite{Frauchiger2018Consistently} (potentially inspired by Brukner \cite{brukner2015quantummeasurementproblem}) first proposed an extension of the Wigner's Friend paradox in which the quantum probability assignments are shown to be logically inconsistent with the assumption of observer-independent facts. However, it was only in Brukner's later version \cite{Brukner2018Observer} that such inconsistencies were used to derive differing, testable predictions. Although these works received significant attention, their conclusions rely on counterfactual reasoning---an aspect which arguably weakened their claims. Bong \textit{et al} later put forward a stronger no-go theorem \cite{Bong2020ExtendedWF}, which, rather than challenging the notion of observer-independent facts involving counterfactual scenarios, directly questions the absoluteness of already observed events. 
 
Inspired by Brukner's proposal \cite{Brukner2018Observer}, Bong \etal ~\cite{Bong2020ExtendedWF} extended the Wigner's Friend thought experiment to the scenario where two observers (Charlie and Debbie) each possess one part of a non-trivially correlated pair (i.e., in quantum theory, an entangled state). These two observers are in labs $\mathbf{L}_{C}$ and $\mathbf{L}_{D}$ respectively. Further, there exist two more observers, Alice and Bob, who can perform any arbitrary operation on lab $\mathbf{L}_{C}$ and $\mathbf{L}_{D}$ respectively. As they each have this power of performing measurements over one of the labs (and so one of the observers), they are referred to as super-observers. This is pictorially depicted in Fig. \ref{fig:LFEWF}.

In this scenario, Charlie measures his system (staying inside his lab), obtaining output $c \in \{0,1\}$. This measurement process is describable by a unitary operation $\mathcal{U}_{C}$ on the lab $\mathbf{L}_{C}$. After Charlie's measurement, Alice chooses $x \in \{0,1,2,..\}$ (her input) at random, acting depending on the value of $x$. If $x=0$, then Alice opens lab $\mathbf{L}_{C}$ and asks Charlie about his measurement result, setting her output $ a\in \{0,1\}$ equal to Charlie's output $c$. In other cases (i.e., if $x=1,2,..$), Alice applies $\mathcal{U}_{C}^{\dagger}$ to $\mathbf{L}_C$, undoing Charlie's measurement of the system, before opening lab $\mathbf{L}_{C}$, and measuring Charlie's system herself, setting her output $a$ as the result of this measurement. Debbie and Bob proceed in the same manner as Charlie and Alice on the other side of the experimental set-up, except Bob uses random input $y\in \{0,1,2,...\}$ instead of $x$, and Debbie's and Bob's outcomes are denoted by $d\in \{0,1\}$ and $b\in \{0,1\}$, respectively (rather than $c$ and $a$). 

Bong \etal derived what they call Local Friendliness (LF) inequalities by considering the empirical probabilities $p(a,b|x,y)$ we get for this scenario when we impose the following set of assumptions:

\begin{itemize}
    \item \textit{Assumption 1. Absoluteness of Observed Events (AOE): An observed event is a real single event, and not relative to anything or anyone.}

This implies that there exists a joint probability distribution $p(a,b,c,d|x,y)$ such that all the empirical probabilities can be recovered from it, as

    \begin{equation}
        p(a,b|x,y)=\sum_{c,d}p(a,b,c,d|xy) \ \ \forall a,b,x,y
    \end{equation}

    \item \textit{Assumption 2. No-Superdeterminism (NSD): Any set of unitarily-erasable events on a space-like hypersurface is uncorrelated with any set of freely chosen actions subsequent to that space-like hypersurface.}

Here, this implies that the joint probability of outcomes $c$ and $d$ is independent of $x$ and $y$, i.e., $p(c,d|xy)=p(c,d)$. In the Bell scenario, this assumption is known as Statistical Independence \cite{Hance2024CFRestrictions} (often referred to erroneously as the ``Free Will'' assumption), which says that the hidden variables are conditionally independent of the choice of measurements. However, in contrast to the Bell scenario, here the hidden variable-like values $c$ and $d$ are outcomes of (unitarily-erasable) observed events.
    
    \item \textit{Assumption 3. Locality (L): The probability of an observable event $e$ is unchanged by conditioning on a space-like-separated free choice $z$, even if it is already conditioned on other events (so long as these events are not in the future light-cone of $z$).}

For the current set-up, the locality assumption reads:

    \begin{align}
        p(a|c,d,x,y)=p(a|c,d,x) \ \ \forall c,d,x,y \nonumber \\
        p(b|c,d,x,y)=p(b|c,d,y) \ \ \forall c,d,x,y 
    \end{align}
\end{itemize}
This assumption is similar to the Parameter Independence assumptions used in the conventional Bell scenario \cite{Myrvold2024sep-bell-theorem,Shimony_1993} - the only difference being that here the (potentially-observed) events $c$ and $d$ replace the hidden variable $\lambda$ from the Bell scenario.

Bong \etal showed that quantum theory violates the derived inequality, and from this violation they claimed: as Assumptions 2 and 3 are (supposedly) satisfied by quantum statistics, then Assumption 1 must be violated. They therefore argue that quantum theory is incompatible with the assumption of Absoluteness of Observed Events. 

It is important to note that, even though the events $c$ and $d$ are treated as observed events in Bong \etal's treatment, this is not necessarily the case: when $x\neq0$ ($y\neq0$), event $c$, $d$ is unitarily erased by Alice (Bob), as if a value was never assigned to the variable. Thus, we argue that events $a$,$b$ and events $c$, $d$ cannot be treated in the same way. We therefore call $c$, $d$ \emph{Pseudo Events}, while $a$ and $b$ we refer to as \emph{Truly-Observed Events}.

We emphasise that the terms Pseudo Events and Truly-Observed Events should not be interpreted as universally- or ontologically-defined concepts. Rather, they are scenario-dependent notions introduced solely to specify the operational structure of the Causal Time-Symmetric Friendliness setup. More precisely, an event is labelled a Pseudo Event if, within the prescribed experimental protocol, there is ever the possibility of it being unitarily erased or being made operationally inaccessible by a super observer, while an event is labelled Truly-Observed event if it is never erased, and always remains empirically accessible in the scenario under consideration. This distinction is therefore relative to the chosen protocol and the level of control available to the (super)observers, and should not be interpreted as an absolute or fundamental classification of events. While it could be argued that, in quantum mechanics, all events are unitarily erasable, so all events are Pseudo Events, this has not been empirically demonstrated. Therefore, in the absence of any way to empirically demonstrate all events are unitarily erasable, we treat Truly-Observed Events as the rule and Pseudo Events as the exception, rather than the reverse.

\begin{center}
    \begin{figure}
     \includegraphics[height=50mm, width=85mm,scale=4]{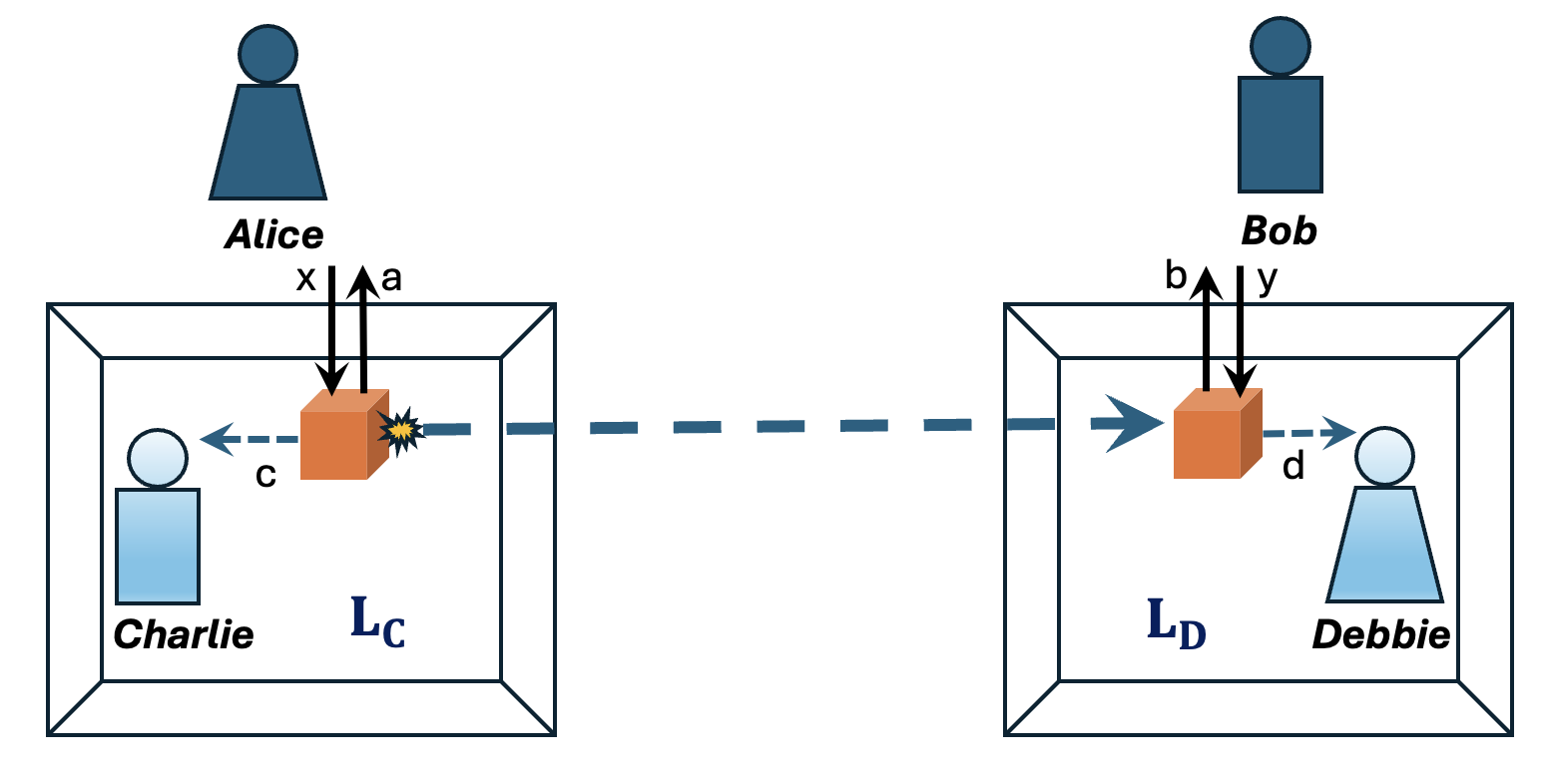}
    \caption{Time-ordered measurement scenario for Causal Time-Symmetric Friendliness no-go theorem.} 
    \label{fig:CFEWF}
\end{figure}
\end{center}

\section{Causal Time-Symmetric Friendliness Paradox}\label{Sect:CFP}
\subsection{The Causal Time-Symmetric Friendliness Scenario}\label{Subsect:NewScenario}

Here, we introduce our new scenario, illustrated in Fig.~\ref{fig:CFEWF}. As in Ref.~\cite{Bong2020ExtendedWF}, Charlie and Debbie begin the experiment sealed in their respective labs. However, here they do not share an entangled state. Instead, Charlie measures his system first, with his measurement outcome denoted by $c \in \{0,1\}$. Given his lab is completely closed to the environment, we again assume this measurement process can be represented by the unitary action $\mathcal{U}_{C}$ from the point of view of the Alice, who resides outside the lab and acts as an super-observer. After Charlie's measurement, Alice acts based on her random input $x \in \{0,1\}$: if $x=0$, Alice opens Charlie's lab, and sets her outcome $a$ equal to his outcome $c$; if $x=1$, Alice unitarily reverses Charlie's measurement process (i.e., applies unitary $\mathcal{U}^{\dagger}_{C}$ to lab $\mathbf{L}_C$) before opening Charlie's lab, and performs some measurement of her own on the system, setting her outcome $a$ as the result of her measurement. 

After Alice sets her outcome $a$, the system Charlie (if $x=0$) or Alice (if $x\neq0$) just interacted with is sent into Debbie's lab, where Debbie measures the system and gets the measurement result $d$. This measurement is characterised by the unitary $\mathcal{U}_{D}$ from outside of the lab. After Debbie's measurement, the super-observer Bob acts based on his random input $y$: if $y=0$, Bob opens Debbie's lab, setting his measurement outcome $b$ to be equal to Debbie's outcome $d \in \{0,1\}$; whereas if $y=1$, Bob applies $\mathcal{U}^{\dagger}_{d}$ to unitarily reverse Debbie's measurement of the system, before opening Debbie's lab and performing a measurement of his own on the system, setting his outcome $b$ equal to the measurement result.

We define a \emph{behaviour} as the collection of joint probability distributions associated with all events in a given experimental scenario, including both Truly-Observed Events and Pseudo Events. A behaviour is specified by the set of joint probabilities $ \mathcal{P} := \{p(c,a,d,b | x,y)\}_{a,b,c,d,x,y}$ for all choices of inputs $x,y$, from which all operationally accessible marginals and conditional probabilities can be obtained by marginalisation. Further, we will call a behaviour $ \mathcal{S} $ that satisfies certain specific properties a \textit{sector} if $\mathcal{S} \subseteq \mathcal{P} $.

\subsection{The Causal Time-Symmetric Friendliness No-Go Theorem}\label{Subsect:CFNogo}

The main metaphysical assumption whose validity we aim to examine here is the same one used to establish the Local Friendliness inequality in \cite{Bong2020ExtendedWF}: the assumption of \textbf{Absoluteness of Observed Events}. However, the no-go theorem we intend to establish is formulated within the prepare-measure scenario discussed in the previous section and illustrated in Fig.~\ref{fig:CFEWF}. In this setting, the aforementioned assumption can be phrased as follows:

\begin{assumption}{Absoluteness of Observed Events:}
  All observed events at a particular point on the space-time region characterised by the measurement outcomes have definite values, irrespective of any event that occurred at any other point in the space-time. 
    
    In Causal Time-Symmetric Friendliness scenario this implies for measurement events  $a,b,c,d$ the joint probability $p(c,a,d,b|x,y)$ exists and the observed probability $p(a,b|x,y)$ can be recovered from it via marginalisation as,
    \begin{equation}
        p(a,b|x,y) = \sum_{c,d} p(c,a,d,b|x,y) \ \ \forall x,y,a,b.
    \end{equation} 
\end{assumption}

As discussed, the scenario we consider here can be cast as a conventional prepare-measure scenario consisting of a preparation stage $\mathbb{P}$ and a measurement stage $\mathbb{M}$. Each stage involves a specified set of inputs, which are directly controlled by the experimentalist, together with a corresponding set of outputs for each input, which are not directly controlled by the experimentalist. Building upon our description of observed events, and inspired by Refs.~\cite{leifer2017,Price2012Retro}, we first define:

\begin{definition}[Operational Time Symmetry] A theory satisfies operational time symmetry if for every prepare-measure experiment with preparation stage $\mathbb{P}$ and measurement stage $\mathbb{M}$, there exists a time-reversed experiment, with preparation stage $\mathbb{P}^{'}$ and measurement stage $\mathbb{M}^{'}$ (with same number of inputs and outputs as that of $\mathbb{M}$ and $\mathbb{P}$ respectively), such that joint probabilities of observed events remain the same, i.e.,
\begin{equation}
    p_{\leftarrow}(f,e|v,u)=  p_{\rightarrow}(e,f|u,v) \ \forall u,v,e,f.
\end{equation}
where $e$ and $f$ are outcomes corresponding to observed events when the measurements $u$ and $v$ respectively are performed, while $p_{\rightarrow}(....)$ and $p_{\leftarrow}(....)$ are probabilities for forward and time-reversed event respectively.
\end{definition}

Less formally, if an experiment has the property of \textbf{Operational Time Symmetry}, this means that if the roles of preparations and measurements in the experiment are swapped, reversing their temporal order, the observed probabilities should remain unchanged. The temporal ordering of events is represented in the causal influence diagram shown in Fig.~\ref{fig:CFcausalinf} for better understanding. Clearly, the notion of time-symmetry adopted here, following \cite{Price2012Retro,leifer2017}, differs from the conventional notion of time-symmetry typically discussed in the physics literature to express the symmetry of dynamics.

A natural question is whether an operational theory should be expected to be \textbf{Operationally Time Symmetric}. The intuitive answer is \emph{No}. This becomes clear if we recall the basic prepare-and–measure setup relevant to a laboratory experiment. Due to the thermodynamic arrow of time, it is natural to demand that future measurements should not affect past preparations, i.e.,  
\begin{equation}
\label{eq:time1}
    p(e|u)= \sum_{f} p(e,f|u,v) = \sum_{f} p(e,f|u,v') \quad \forall v,v'. 
\end{equation}
In contrast, it is not natural to expect the reverse condition, namely that preparations should not influence future measurements, i.e.,  
\begin{equation}
\label{eq:time2}
    p(f|v)= \sum_{e} p(e,f|u,v) = \sum_{e} p(e,f|u',v) \quad \forall u,u', 
\end{equation}

\begin{center}
    \begin{figure}
     \includegraphics[height=50mm, width=85mm,scale=4]{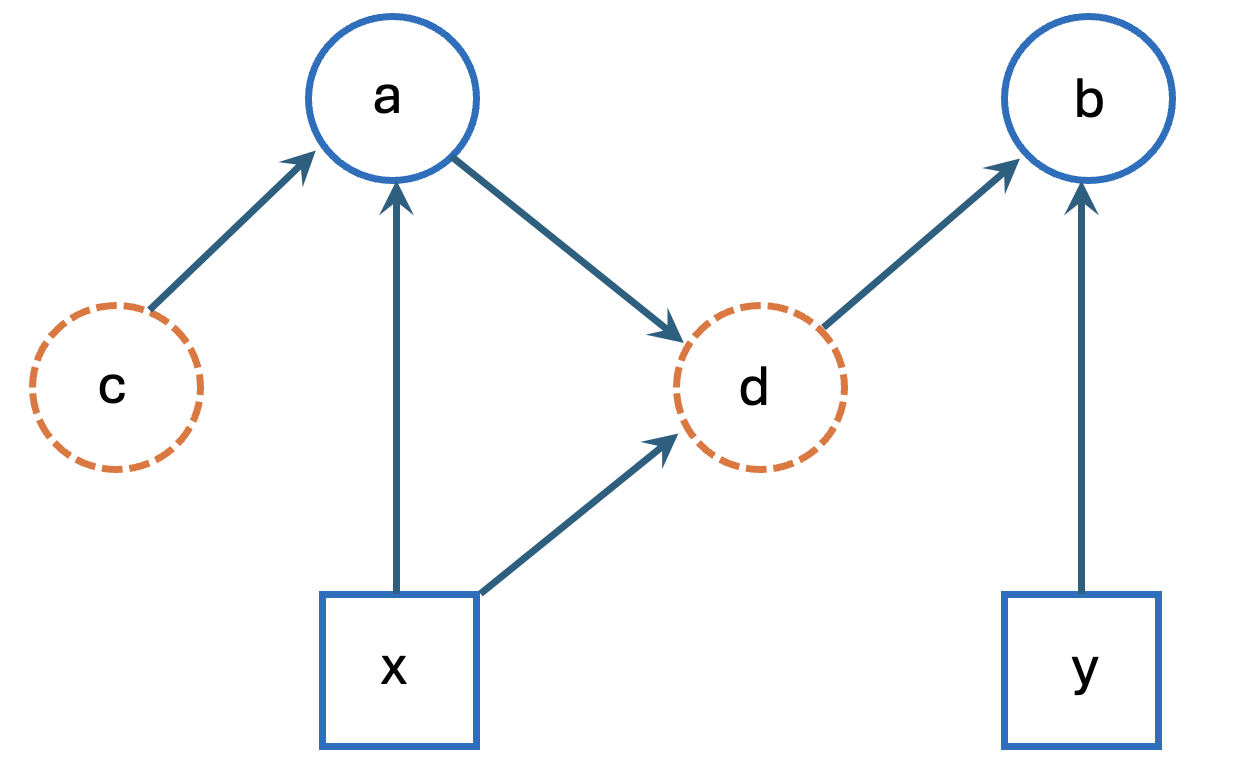}
    \caption{Causal influence diagram for the Causal Time-Symmetric Friendliness scenario. The rectangular entities are those that are in direct control of the observer, while the circles represents events that are not in direct control. The arrows represents the order of the events conveying the cause-effect relationships between the events. Red coloured shapes indicate the Pseudo Events while the blue one are Truly-Observed Events.}
    \label{fig:CFcausalinf}
\end{figure}
\end{center}

The impossibility of satisfying Eq.~\eqref{eq:time2} is the main obstacle in having an \textbf{Operationally Time Symmetric} description of a theory. Nevertheless, without loss of generality, we may restrict our attention to those preparations for which Eq.~\eqref{eq:time2} \emph{does} hold. We refer to the corresponding set of behaviour as \emph{no-signalling in time}, denoted by $\mathcal{NS}_{T}$. It was shown in \cite{leifer2017} that the sector of quantum theory satisfying $\mathcal{NS}_{T}$ is \textbf{Operationally Time Symmetric}. In what follows, while establishing our main results, we restrict to quantum preparations belonging to $\mathcal{NS}_{T}$ so that \textbf{Operational Time Symmetry} is guaranteed.

It is important to note that the definition of \textbf{Operational Time Symmetry} pertains exclusively to events that are both real and operationally accessible. In contrast, the Causal Time-Symmetric Friendliness scenario includes not only these real events but also additional elements referred to as Pseudo Events. Consequently, in such a scenario, only an analogue of \textbf{Operational Time Symmetry} can be meaningfully defined. Motivated by this distinction, we employ this analogue to establish the main results of the paper.

In the Causal Time-Symmetric Friendliness scenario, the pair Charlie-Alice acts before Debbie Bob in forward time, so the former pair naturally plays the role of the preparers while the latter acts as the measurers. In the time reversed description, however, Debbie-Bob acts before Charlie-Alice, thereby interchanging the roles of preparation and measurement. At first sight, one might therefore be tempted to interpret the local ordering at each site as defining an independent prepare-measure scenario, for instance by viewing Charlie as preparing the system and Alice as subsequently measuring it at one site, with an analogous interpretation for Debbie and Bob at the other site. However, such a description of the prepare-and-measure structure is ill-defined. To see this, consider a particular experimental run in which $x = 1$ (or $y = 1$), so that the corresponding event $c$ (or $d$) is erased and effectively treated as if it never occurred. In such a run, there is thus a measurement event without a corresponding preparation event, which shows that interpreting $c$ alone as a preparation is inconsistent. As discussed earlier, the events $c$ and $d$ are Pseudo Events, since they are either erased or ultimately mapped to the outcomes $a$ and $b$, respectively. Consequently, a consistent operational description only allows the pairs $(c,a)$ to be regarded as a preparation and $(d,b)$ as a measurement, rather than splitting these roles between the individual events $c$ and $a$, or $d$ and $b$.

This means the counterpart of the definition of \textbf{Operational Time Symmetry} namely the \textbf{Axiological Time Symmetry} assumption applicable to this scenario does not involve reversing the entire temporal sequence $c \rightarrow a \rightarrow d \rightarrow b$ to $b \rightarrow d \rightarrow a \rightarrow c$. Instead, it reverses the temporal order of the grouped preparation and measurement events, namely $(c, a) \rightarrow (d, b)$, into $(d, b) \rightarrow (c, a)$.
Considering this, a generalisation of \textbf{Operational Time Symmetry} can be proposed when this definition is extended to include the Pseudo Events. This forms the assumption of \textbf{Axiological Time Symmetry}, the final assumption needed for us to generate a full \textbf{Causal Time-Symmetric Friendliness No-Go Theorem}. Before stating the theorem, it is essential to formally define other assumptions and their implications, which are discussed in the following.

\begin{assumption}{\textbf{Axiological Time Symmetry:}}
   If the Operational Time Symmetry holds for Truly-Observed Events in a given scenario then it must also hold for all the events including the Pseudo Events involved in the scenario.
    
    In the taxonomy the Causal Time-Symmetric Friendliness settings this means that if statistics generated by the events corresponding to Alice's and Bob's measurements satisfies Operational Time Symmetry then in order to satisfy Axiological Time Symmetry the following relation must hold,
    \begin{equation}
    p_{\leftarrow}(d,b,c,a|y,x) =  p_{\rightarrow}(c,a,d,b|x,y) \ \forall x,y,a,b,c,d.
    \end{equation} 
\end{assumption}

We emphasise that we neither claim that \textbf{Operational Time Symmetry} holding at the level of Truly-Observed events means it must also hold for all events in scenarios involving Pseudo Events, nor that any underlying description incorporating pseudo events must satisfy \textbf{Axiological Time Symmetry} independently of the theory under consideration. Rather, in this work, we deliberately restrict our attention to cases in which \textbf{Axiological Time Symmetry} holds, and investigate what can be inferred about quantum theory under this assumption. Our aim in this paper is to examine the resulting constraints on observable correlations, assuming that \textbf{Axiological Time Symmetry} and the other stated assumptions hold, and to compare these constraints with quantum predictions. In the present scenario, \textbf{Axiological Time Symmetry} should be understood as an additional structural assumption extending \textbf{Operational Time Symmetry} beyond empirically accessible Truly-Observed Events to include Pseudo Events. While this extension is not compelled by operational considerations alone, we argue that it is a coherent and well motivated assumption within the specific Causal Time-Symmetric Friendliness framework considered here, and we make this restriction explicit while establishing our main results.

One might object the fact that the assumption of \textbf{Axiological Time Symmetry} already presupposes \textbf{Absoluteness of Observed Events} hold, since it is formulated in terms of a joint probability distribution over all events. We argue that although AOE is a prerequisite for defining ATS, this is a very natural and unavoidable pre-assumption rather than an additional hidden commitment. This point can be understood by analogy with Bell's notion of local causality \cite{Bell1964,Wiseman2017}. In deriving a Bell inequality, one typically invokes realism, locality, and freedom of choice. However, locality cannot be meaningfully formulated without first specifying realism, as the notion of locality presupposes the existence of definite underlying events or variables to which locality can apply. In this sense, realism is already implicitly assumed when locality is defined. In an analogous, though conceptually distinct manner, it is still meaningful for us to investigate the effect of the assumption of \textbf{Axiological Time Symmetry}, even though it implicitly assumes \textbf{Absoluteness of Observed Events}. Specifically, in order to define \textbf{Axiological Time Symmetry} in probabilistic terms, one must first assume \textbf{Absoluteness of Observed Events}.
    
\begin{assumption}{\textbf{No Retrocausality:}}
   Any future measurement choice  cannot influence the past outcomes corresponding to the Pseudo Events  and any Truly-observed event in the past cannot be influenced by the future Pseudo Events.

In the present scenario this assumption implies that events corresponding to the outcome $ c(d) $ is independent of the choice of measurement $ x(y) $ and the Truly-Observed Event $a$ is should be independent Pseudo Event $d$. 
\end{assumption}

Here, the terms future and past are used in their standard physical sense - namely, an event $X$ is said to occur in the future (past) of another event $Y$ if it lies within the future (past) light cone of the spacetime region associated with $Y$. Moreover, throughout the paper, statements of the form ``$X$ cannot influence $Y$'', in particular within the \textbf{No Retrocausality} assumption, are to be understood in a strictly operational and statistical sense. By this we mean that appropriate conditional independence relations hold for the relevant probability distributions, for example
\begin{equation}
\begin{split}
&p(c | x,y) = p(c), \\
&p(d | y) = p(d), \\
&p(a | c,x,d) = p(a | c,x). 
\end{split}
\end{equation}

We do not assume a general causal Markov condition associated with directed acyclic graph-based causal models~\cite{pearl2009causality}. Instead, these conditional independences are postulated directly as scenario-specific restrictions motivated by the operational structure of the Causal Time-Symmetric Friendliness setup. The phrase ``cannot influence'' should therefore be read exclusively as a shorthand for these operational conditional independence constraints.

The \textbf{No Retrocausality} assumption emerges by analogy to the idea of retrocausality in hidden variable models \cite{Price2012Retro,Costa_de_Beauregard1976-oa}. Retrocausality is one of the possible interpretations of the violation of the Statistical Independence assumption in hidden variable models - i.e., a way of interpreting the idea that there may be a correlation between the hidden variables $\lambda$ in the model, and the measurement choices \cite{Wharton2020Retrocausality,adlam2024taxonomy}. Given these measurement choices are typically defined as being ``free'', it is specifically the interpretation that the hidden variable $\lambda$ is in some way causally dependent on these measurement choices. This is as opposed to the interpretation of the violation of Statistical Independence as Superdeterminism, which denies this definition of the measurement choices as ``free'', and instead says that they can be causally dependent on the hidden variables, or the interpretation of this violation through counterfactual restriction, where instead simply certain values of hidden variable and certain choices of measurement simply cannot exist, without there necessarily being a causal direction to that prohibition \cite{Hance2024CFRestrictions}. Note that, despite retrocausality implying a causal dependence of the hidden variables on the measurement choices, this does not necessarily imply any backwards-in-time signalling from the point the choices were made - indeed, most modern versions of retrocausality are all-at-once models, where, despite causal dependency, the model is treated as a ``block universe'' structure, where time is just a dimensional parameter, rather than implying any sort of ordered priority.

The retrocausality we appeal to with one of the parts of our \textbf{No Retrocausality} assumption takes a similar form to this retrocausality, except the prohibited relation is a causal dependence of the (potentially-observable) past Pseudo Events on the future measurement choices, rather than the dependence of any hidden variables on the measurement choices. This is as an instance  of \textbf{No Retrocausality} (namely Pseudo Events are independent future measurement choices) where the Pseudo Events act equivalently to hidden variables in the conventional description of retrocausality. The other part of \textbf{No Retrocausality} comes from the idea that a Truly-Observed Past event $a$ cannot also depend on any of the non input variables $d$ that are potentially erased depending on a free choice.

Note that, in no part of the paper do we claim that $c, d$ should be regarded as something that exists independent of the observation that contains intrinsic properties of the system. Rather, we say if a Pseudo Event $c$ or $d$ occurs then it creates accessible information about the system (even if erased) which can in principle influence any future event corresponding to any future measurement outcome. This inspires us to define the final metaphysical assumption as follows:

\begin{assumption}{\textbf{Screening via Pseudo Events:}}
  Every Truly-Observed Event is screened off from all the non-input variables that lies in its causal past by Pseudo Events that its past light cone contains.
  
In the current scenario, for Bob’s observed event, this assumption implies:
    \begin{equation}
       \ \ \ \ \ \ p(b|c,x,a,d,y)=p(b|c,x,d,y) \ \ \forall b,c,x,d,y
    \end{equation}
\end{assumption}

The variables $c$ and $d$, representing the Pseudo Events, impose constraints on Truly-Observed Events strictly weaker than those imposed by hidden variables in conventional no-go theorems. Hidden variables are typically assumed to encode complete information about a physical system and (at least in principle) enable an event-by-event description of its behaviour. In contrast, the \textbf{Screening via Pseudo Events} assumption endows $c$ and $d$ with only limited power, allowing them to influence solely those Truly-Observed Events lying within their future light cones, and screening the future events off from any past events that are already influenced by, or that are already influencing, these Pseudo Events. For example, if the Pseudo Event $c$ exists and is absolute, then the Truly-Observed Event $a$ must always depend on $c$, regardless of the choice of $x$. In particular, for $x=0$ one has $a=c$, while for other values of $x$, even if $c$ were erased and the laboratory restored to its prior state, the effect of any past information on $a$ could still be argued to be screened through the pre-existing value of $c$. In this sense, so long as there is the assumption (implicit in, but weaker than, \textbf{Absoluteness of Observed Events}), that the Pseudo Events $c$ and $d$ are Absolute, then \textbf{Screening via Pseudo Events} emerges as a very natural assumption. We will discuss this implicit assumption further in Section \ref{Subsect:impossAPE}, where we attempt to subdivide \textbf{Absoluteness of Observed Events} to see which parts of it are actually necessary to generate a contradiction with quantum mechanics. While the implications of relaxing \textbf{Absoluteness of Observed Events} as a whole have been considered in literature \cite{Adlam2024AOE,walleghem2024,Wiseman2023thoughtfullocal}, we believe we are the first to consider the effect of sub-assumptions of \textbf{Absoluteness of Observed Events}.

\begin{thm}[Causal Time-Symmetric Friendliness No-Go Theorem]
\label{Causal_Friendliness_theorem}
Consider the scenario we give in Section~\ref{Subsect:NewScenario}, with Charlie and Alice acting before Debbie and Bob. 
Let $a,b,c,d\in\{\pm1\}$ denote the outcomes of Alice, Bob, Charlie, and Debbie, respectively, and let $x,y\in\{0,1\}$ denote the measurement settings of Alice and Bob. 
If the operational statistics admits \textbf{ Absoluteness of Observed Events}, \textbf{Axiological Time Symmetry}, \textbf{No Retrocausality}, and \textbf{Screening via Pseudo Events} , the observed correlations satisfy the inequality:
\begin{equation}
\label{chsh}
    \Big|\,\langle A_{0} B_{0} \rangle +\langle A_{0} B_{1} \rangle +\langle A_{1} B_{0} \rangle -\langle A_{1} B_{1} \rangle \,\Big| \;\le 2\,,
\end{equation}
where $\langle A_{x} B_{y} \rangle = \sum_{a,b=\pm1} ab\,p(a,b|x,y)$ are the joint expectation values for the measurements, conditioned on the inputs $(x,y)$.
\end{thm}
{\emph{Sketch of the proof:} Starting from \textbf{Absoluteness of Observed Events}, we assume the existence of a joint probability distribution \(p(c,a,d,b|x,y)\), from which the observed statistics are obtained by marginalisation. The assumption of \textbf{No Retrocausality} and \textbf{Axiological Time-Symmetry} further implies $p(c,d|x,y) = p(c,d)$, $p(a|c,d,x,y ) = p(a|c,x,d)$, and $p(b | y, c, x, d) = p(b | y, c, d)$. Further, \textbf{Screening via Pseudo Events} implies relations of the form \(p(b|a,c,d,x,y)=p(b|c,d,y)\). Combining these constraints with Axiological Time Symmetry, one obtains that the joint distribution admits the factorisation
\begin{equation}
 p(a,b|x,y) = \sum_{c,d} p(c,d)\, p(a|x,c)\, p(b|y,c,d).
\end{equation}
The bound in Eq.~\eqref{chsh} then follows directly from this decomposition. The complete proof of this theorem is in Appendix~\ref{App:ProofCFTheorem}.}

\subsection{Relation to the causal Markov condition}
It is useful to clarify how the assumptions employed in this work relate to the standard causal Markov condition used in directed acyclic graph-based causal models \cite{pearl2009causality}. For the forward time ordering of the Causal Time-Symmetric Friendliness scenario, a natural causal structure would suggest the Markov factorisation

\begin{equation}
\begin{split}
p&(c,a,d,b | x,y) =\\ &p(c)\,p(a | c,x)\,p(d | c,a)\,p(b | c,a,d,y),
\end{split}
\end{equation}

from which all conditional independence relations would normally follow.
In this work, however, we do not assume the causal Markov condition. Instead, we impose a restricted set of operationally-motivated conditional independence constraints, encoded in the assumptions of \textbf{No Retrocausality} and \textbf{Screening via Pseudo Events}. As a consequence, some of the conditional independences implied by the causal Markov condition are retained, while others are do not appear in our factorisation. Specifically, \textbf{No Retrocausality} enforces that future measurement choices influence neither past Pseudo Events nor past Truly-Observed Events. This retains Markov-type independences such as the independence of $c$ from $x,y$, and the independence of $a$ from $y$ once its past variables are fixed. \textbf{Screening via Pseudo Events} further ensures that future Truly-Observed Events are screened off from earlier events by appropriate Pseudo Events, thereby retaining the Markov style screening of $b$ by $c,d,y$.

Crucially, we do not require that conditioning on immediate causal parents always removes all further dependencies. In particular, we allow the Pseudo Event $d$ to depend on either Alice's measurement choice nor her outcome, and we do not impose the minimal parent screening characteristic of the causal Markov condition. These dropped independences are essential for accommodating Pseudo Events, and for distinguishing our framework from standard hidden variable models or directed acyclic graph-based causal models. The relation between the conditional independences implied by the causal Markov condition and those assumed in this work is summarised in Table~\ref{tab:markov_comparison}. Overall, our framework abandons the global causal Markov assumption and replaces it with a controlled set of conditional independence relations justified directly by \textbf{No Retrocausality} and \textbf{Screening via Pseudo Events}, which are sufficient for deriving the Causal Time-symmetric Friendliness inequality.

 \section{Role of Pseudo Events in the Causal Time-Symmetric Friendliness Paradox}\label{Sect:CFPWeakenedAOE}

From the analysis in the previous section, a natural question arises: does the violation of Eq.~\eqref{chsh} imply that the observed events are not absolute (even the Pseudo Events), but instead subjective to the observers? To address this, we first examine to what extent the \textbf{Absoluteness of Observed Events} assumption is truly necessary, in addition to \textbf{Axiological Time Symmetry}, \textbf{No Retrocausality}, and \textbf{Screening via Pseudo Events}, for establishing the Causal Time-Symmetric Friendliness no-go theorem. 
In this context, we attempt to supersede the \textbf{Absoluteness of Observed Events} assumption by drawing an analogy with the replacement of the ``wavefunctions are \emph{$\psi$-ontic}'' assumption used in Price's no-go result \cite{Price2012Retro}, by the $\lambda$-mediation assumption (discussed later in this section), which Leifer and Pusey used to establish a no-go theorem~\cite{leifer2017} about time symmetry and retrocausal ontology. Price's no-go theorem \cite{Price2012Retro} suggests that any time-symmetric ontology of quantum theory must entail retrocausality. His argument proceeds by showing that the following three assumptions, when taken together, contradict quantum theory:
\begin{itemize}
    \item \textbf{No Ontic Retrocausality}: \emph{Future measurement choices cannot influence the distribution of past ontic outcomes. In the forward-time ordering this gives the most general causal factorisation}
  \begin{equation}
       p(\lambda,a,b | x,y)
    \;=\; p(\lambda)\,p(a | x,\lambda)\,p(b | a,\lambda,y).
  \end{equation}
   Note this is defined analogously to condition \textbf{No Retrocausality} given earlier in this paper, but applied to hidden variable $\lambda$ rather than Pseudo Events $c$ and $d$;
    \item \textbf{Ontological Time-Symmetry}: \emph{The ontological probabilities for the forward ordering equal those of the appropriate time-reversed ordering, i.e.}
  \begin{equation}
    p_{\rightarrow}(\lambda,a,b| x,y) \;=\; p_{\leftarrow}(b,a,\lambda| y,x).
  \end{equation}
    This is the the ontological counterpart of the \textbf{Axiological Time Symmetry} assumption, that is applied to probabilities associated with $\lambda$ instead of the Pseudo Events;
    \item \textbf{Reproducibility with a \emph{$\psi$-ontic} ontological model}:  \emph{Compatibility of quantum statistics with the assumption that ``No two distinct quantum states have overlapping ontic state distributions'', in the sense of Harrigan and Spekkens’s definition of $\psi$-ontic~\cite{Harrigan2010Nonlocality}, formulated within Spekkens’s ontological models framework~\cite{Spekkens2005Contextuality}.}
\end{itemize}

\begin{table*}[t]
\centering
\begin{tabular}{lll}
\hline
Conditional independence & Causal Markov \ \ \ & \ \ \  Status here \\
\hline
$p(c | x,y)=p(c)$ & \ \ Implied & \ \  Retained by \textbf{No Retrocausality} \\
$p(a | c,x,d,y)=p(a | c,x,d)$  & \ \  Implied & \ \  Retained by Lemma \ref{lem:bob} \\
$p(b | a,c,d,x,y)=p(b | c,d,y)$ & \ \  Not implied & \ \  Retained by \textbf{Screening via Pseudo Events} and Lemma \ref{lem:bob} \\
$p(d | c,a,x)=p(d | c,a)$ & \ \ Implied & \ \  Does not appear \\
$p(b | c,a,d,y,x)=p(b | d,y)$ & \ \ Implied & \ \  Does not appear\\
\hline
\end{tabular}
\caption{Comparison between conditional independences implied by the causal Markov condition and those employed in this work.}
\label{tab:markov_comparison}
\end{table*}

Note however that ontological models are only a small subset of possible hidden variable models. Further, Ref.~\cite{tezzin2025ontological,tezzin2025det} recently showed that ontological models have trouble reproducing key aspects of the predictions of quantum mechanics, even before we consider the subset which are $\psi$-ontic. Therefore, reproducibility with a $\psi$-ontic model is not a natural assumption to use when proving the fact that time-symmetric interpretation of quantum theory would be retrocausal. The status of the state, whether it is ontic or epistemic (or potentially both \cite{Hance2022Simultaneously}), should not come into the proof. In order to address this problem, and to prove Price's result without attributing any ontological status to the state, Leifer and Pusey \cite{leifer2017} showed that combining Price's assumptions of No Retrocausality and Time-symmetry with another assumption, $\lambda$-mediation, leads to a contradiction with quantum theory \cite{leifer2017}.

\subsection{Replacing \texorpdfstring{$\psi$-\emph{ontic}}{psi-ontic} reproducibility with \texorpdfstring{$\lambda$}{lambda}-mediation}\label{Subsect:lambdafrompsi}

In the ontological models framework, where the quantum state (wavefunction) is represented by a probability distribution over a space of underlying 'ontic' states (see~\cite{Leifer2014} for details), models are classified as either \emph{$\psi$-ontic} or \emph{$\psi$-epistemic}. In \emph{$\psi$-ontic} models, distinct quantum states correspond to non-overlapping distributions over this ontic state space. \emph{$\psi$-epistemic} models instead allow quantum states to overlap on this space. The extra assumptions imposed on the possible ontological counterpart of the quantum state often restrict the statistics that such state can generate.

Price's result~\cite{Price2012Retro} on the time-symmetric ontology of quantum theory lacks generality, since his argument relies explicitly on assuming reality is \emph{$\psi$-ontic}. While \emph{$\psi$-epistemic} models are themself subject to significant criticisms~\cite{Pusey2012_PBR,Aaronson_PRA13,hardy_PI,barrettepsitemic14,Cyril_epistemic,montina15,Leifer13,leifer14}, even together these two categories leave out all models of the wavefunction which do not follow the ontological models framework. This suggests that, to properly investigate time-symmetric ontologies, we should avoid limiting ourselves to one or the other categories of ontological model. 

In this direction, Leifer and Pusey~\cite{leifer2017} proposed a weaker requirement, known as \emph{$\lambda$-mediation}, according to which all correlations between preparations and measurement outcomes are mediated via the underlying ontic state~$\lambda$ in an Ontological Model of the scenario. Formally, this means that the probability of an outcome $b$ for a arbitrary preparation $P$ and a measurement $M$ can be written as
\begin{equation}
  p(b|P,M) = \sum_{\lambda} p(\lambda|P)\, p(b|M,\lambda),
\end{equation}
without requiring that the ontic distributions $p(\lambda|P)$ corresponding to different quantum states be either disjoint or overlapping. Thus, $\lambda$-mediation relaxes Price's $\psi$-ontic reproducibility assumption by removing the requirement of attributing a particular status to quantum state, while still preserving the idea that $\lambda$ screens off the correlations between preparation and measurement.

We here ask whether we can relax the \textbf{Absoluteness of Observed Events} assumption to a weaker assumption in the same way as Leifer and Pusey replace Price's $\psi$-ontic assumption with $\lambda$-mediation \cite{leifer2017}. Given the \textbf{Absoluteness of Observed Events} assumption seems similar to the $\psi$-ontic assumption (albeit imposed on the operational and observational level instead of the ontological/hidden-variable level \footnote{Quantum states encode probabilities for all possible events, typically corresponding to measurement outcomes. In a 
$\psi$-ontic model, distinct quantum states correspond to ontic states with disjoint support in the underlying ontic state space. Analogously, the assumption of Absoluteness of Observed Events asserts that distinct observed events correspond to disjoint support in the relevant probability space. Thus, although the assumptions are structurally and physically different, Absoluteness of Observed Events plays a role analogous to $\psi$-onticness when one is concerned with the definiteness and global consistency of an underlying description.}), we argue that the answer is ``yes''. To show this is the case, we first examine formally the relation between the \textbf{Absoluteness of Observed Events} assumption, and the $\psi$-ontic criterion for ontological models. We then show that the \textbf{Absoluteness of Observed Events} assumption can be relaxed, to an operational equivalent of $\lambda$-mediation, which is weaker than AOE, while still allowing us to form Local Friendliness-style inequalities which quantum theory violates.

\subsection{Can we weaken the \textbf{Absoluteness of Observed Events} assumption?}\label{Subsect:WeakeningAOE}

The logical structure underlying our weakening of the \textbf{Absoluteness of Observed Events} assumption closely follows Leifer and Pusey's weakening of Price's $\psi$-ontic assumption to $\lambda$-mediation \cite{leifer2017} - both replacing a strong \emph{absoluteness} assumption ($\psi$-ontic reproducibility or \textbf{Absoluteness of Observed Events}) with a weaker one (ontic or operational mediation). The key similarity between the arguments in \cite{leifer2017} and ours lies in the fact that the independence of the mediator variables from the choice of measurement settings are not assumed but rather derived from other underlying assumptions.

Let the Causal Time-Symmetric Friendliness scenario be as in Sec.~\ref{Sect:CFP}. We replace \textbf{Absoluteness of Observed Events} (existence of the full joint $p(c,a,d,b | x,y)$) with the following weaker, purely operational assumption:
\begin{definition}[Operational Pseudo Event Mediation ]
\label{thm:opem}
A theory admits \textbf{Operational Pseudo Event Mediation} if all of its statistics corresponding any arbitrary time-ordered prepare-measure scenario satisfies the following two assumptions.
\begin{enumerate}
  \item \textbf{Existence of marginals:} For all measurements there exist
  well-defined operational marginals for the joint probabilities of the Pseudo Events and conditional response functions for the Truly-Observed Events that are independent of all the measurements settings excepts its immediate cause, describing how the outcomes of Truly-Observed Events depend on Pseudo Events and choice of measurements (The assumption does not require that these response functions and the joint probability distribution for the Pseudo Events arise from a single global probability distribution.)\\
  For Causal Time-Symmetric Friendliness scenario this implies that for all settings $(x,y)$ there exist operational marginals $p(c,d | x,y)$ and conditional response functions $p(a | x,c)$ and $p(b | y,c,d)$.
  \item \textbf{Operational mediation:} All the Pseudo Events act as mediators between past and future events, such that once they are specified, one Truly-Observed Event provides no information about a future Truly-Observed Event beyond that already provided by the Pseudo Events in its past and the measurement that is its immediate cause.\\ 
  The specific version of this assumption, tailored to the present scenario, can be written technically as follows:
  \begin{equation}
    p(b | a,c,d,x,y)=p(b | c,d,y),
    \label{eq:opem_spe}
  \end{equation}
 for all arguments.
\end{enumerate}
\end{definition}

The above definition can be summarised as the following implication,
\begin{align}
    &\textbf{Existence of marginals} +\textbf{Operational mediation} \nonumber \\
& \ \implies \textbf{Operational Pseudo Event Mediation}
\end{align}
With these assumptions in place, we establish the following result.

\begin{thm}\label{thm:2}
 Any theory that admits \textbf{Operational Pseudo Event Mediation}, together with the \textbf{Axiological Time Symmetry} and \textbf{No Retrocausality} assumptions (as defined in Sec.~\ref{Sect:CFP}), yields the following factorisation of the operational probabilities
\begin{equation}
\label{eq:opem_factorisation}
  p(a,b | x,y) \;=\; \sum_{c,d} p(c,d)\,p(a | x,c)\,p(b| y,c,d),
\end{equation}
and hence the Causal Time-Symmetric Friendliness bound $|S|\le 2$ follows.   
\end{thm}

{To provide some intuition for Theorem \ref{thm:2}, we briefly outline the main idea of the proof, with full details given in Appendix~\ref{App:ProofTheorem2}. From Operational Pseudo Event Mediation, we use the existence of operational marginals $p(c,d|x,y)$ and response functions $p(a|x,c)$, $p(b|y,c,d)$ to construct a joint distribution $q(a,b,c,d|x,y)$. \textbf{Axiological Time Symmetry} then implies that this construction is invariant under the interchange $(c,a,x)\leftrightarrow(d,b,y)$, yielding $p(c,d|x,y)=p(c,d|y,x)$. Combining this with \textbf{No Retrocausality} enforces $p(c,d|x,y)=p(c,d)$. Substituting this into the constructed distribution yields the factorisation of Eq.\eqref{eq:opem_factorisation} and so the bound $|S|\le 2$ follows directly.}

We emphasize that the motivation for introducing \textbf{Existence of Marginals} and\textbf{Operational Mediation} is not merely to weaken assumptions mathematically, but to identify the minimal operational structure required to derive the Causal Time-Symmetric Friendliness inequalities. To illustrate the distinction between \textbf{Absoluteness of Observed Events} and the weaker assumptions of \textbf{Existence of Marginals} and \textbf{Existence of Marginals}, consider a prepare-measure scenario in which certain intermediate events may be unitarily erased by a later intervention. In such a case, the operationally accessible statistics consist of well defined marginals, such as $p(c,d | x,y)$, together with response functions $p(a | x,c)$ and $p(b | y,c,d)$ associated with Truly-Observed Events. However, when the intermediate events are erased, there is no operational procedure that allows one to access or verify the existence of a single joint probability distribution over all events in a given run. Assuming the existence of such a joint distribution therefore goes beyond what is operationally warranted. This motivates the assumption of \textbf{Existence of Marginals}.

At the same time, it remains operationally natural to require that, once the relevant pseudo events in the causal past are specified, future observed outcomes are screened off from additional variables. In this sense, pseudo events can be viewed as revealing intrinsic properties of the system that constrain what future events can or cannot occur. This requirement is captured by \textbf{Operational Mediation} and expresses a minimal notion of causal screening that does not rely on underlying ontic states or the full causal Markov condition. The following toy example explicitly demonstrates the legitimacy of a situation where \textbf{Existence of Marginals} and \textbf{Operational Mediation} can hold while \textbf{Absoluteness of Observed Events} fails.

\emph{Example: \textbf{Existence of Marginals} and \textbf{Operational Marginals} without \textbf{Absoluteness of Observed Events}.}
 Consider the pseudo events $c,d\in\{0,1\}$, and Truly-Observed Events $a,b\in\{0,1\}$, and inputs $x,y\in\{0,1\}$. Let the pseudo events be distributed as $p(c,d)=1/4$, independent of $x,y$. Alice’s response functions are defined by $p(a=c | x=0,c)=1$ which depends on $c$ and $p(a=0 | x=1,c)=p(a=1 | x=1,c)=1/2$, independent of $c$. Bob’s response functions are defined analogously by $d$-dependent probability $p(b=d | y=0,d)=1$ and $p(b=0 | y=1,d)=p(b=1 | y=1,d)=1/2$, independent of $d$. 

All operationally relevant marginals $p(c,d)$, $p(a | x,c)$, and $p(b | y,d)$ are well defined, and the pseudo events screen off future outcomes, so that $p(b | a,c,d,x,y)=p(b | y,d)$ and $p(a | b,c,d,x,y)=p(a | x,c)$, satisfying EOM and OM. However, because the outcomes associated with $x=0$ and $x=1$ (and similarly $y=0$ and $y=1$) cannot be jointly accessed in a single run due to the possibility of unitary erasure of pseudo events, the operational data do not warrant the existence of a single joint distribution $p(c,a,d,b | x,y)$. For instance, the relevant marginals can come from a joint quasi-probability distribution \cite{wignerquasi,kirkwood}. Thus in this toy example \textbf{Absoluteness of Observed Events} is not satisfied, showing that \textbf{Operational Pseudo Event Mediation} is strictly weaker than it.

It is important to note that the assumption \textbf{Operational mediation} is stronger than the \textbf{Screening via Pseudo Events} assumption used in the proof of the Causal Time-Symmetric Friendliness no-go theorem. In other words, \textbf{Screening via Pseudo Events} can be viewed as a specific instance of \textbf{Operational mediation}, where the dependence of the event $b$ on the past input choice $x$ is not relaxed. In the proof of Causal Time-Symmetric Friendliness no-go theorem, however, the independence of $b$ from $x$ is derived from the joint assumptions of \textbf{Absoluteness of Observed Events}, \textbf{Axiological Time Symmetry}, and \textbf{No Retrocausality}. In contrast, in the present analysis, we are not permitted to assume \textbf{Absoluteness of Observed Events} at the outset, and consequently, the independence of $b$ from $x$ is incorporated directly into the relatively stronger assumption of \textbf{Operational mediation} itself.

Importantly, although \textbf{Existence of marginals} by itself does not require the existing marginals to originate from a single global joint probability distribution, imposing the requirement that those existing marginals given by \textbf{Existence of marginals} originate from a single global joint probability distribution leads to \textbf{Absoluteness of Observed Events}. Then we clearly have,



\begin{align}
   \textbf{Absoluteness of } & \textbf{Observed Events} \nonumber\\
   \implies& \textbf{Existence of marginals} 
\end{align}

However, the reverse implication may not hold in general as \textbf{Existence of marginals} does not necessarily involve a single factorizable joint probability distribution.

\subsection{Impossibility of discarding the Absoluteness of Pseudo Events}\label{Subsect:impossAPE}

Above, we showed that we can weaken the \textbf{Absoluteness of Observed Events} assumption to \textbf{Existence of marginals}, and still obtain a contradiction with quantum mechanics, so long as we strengthen \textbf{Screening via Pseudo Events} into \textbf{Operational mediation}. We now argue that we can split \textbf{Existence of marginals} up further, into \textbf{Absoluteness of Truly-Observed Events} , and \textbf{Absoluteness of Pseudo Events} defined respectively as,

\begin{definition}[\textbf{Absoluteness of Truly-Observed Events}]
    A theory satisfies Absoluteness of Truly-Observed Events if all observed events at a particular point on the space-time region characterised by measurement outcomes $a,b$ have definite values regardless irrespective of any event that occurred at any other point in the space time. This means that the joint probability $p(a,b|x,y)$ exists.
\end{definition}

\begin{definition}[\textbf{Absoluteness of Pseudo Events}]
        A theory satisfies Absoluteness of Pseudo Events if all observed events at a particular point on the space-time region characterised by measurement outcomes $c,d$ have definite values regardless irrespective of any event that occurred at any other point in the space time. This means that the joint probability $p(c,d|x,y)$ exists.
\end{definition}

We can summarise the connection between the assumptions discussed in Sec.~\ref{Subsect:impossAPE} and Sec.~\ref{Subsect:WeakeningAOE} as follows:
\begin{align}
\label{eq:spliting}
  & \textbf{Existence of marginals}+ \textbf{No Retrocausality} \nonumber \\ 
  &\implies \textbf{Absoluteness of Truly-Observed Events}  \nonumber \\
  & \ \ \ \ \ \ \ \ \ \ \ \ \ \ \ \ + \textbf{Absoluteness of Pseudo Events}   
\end{align}

The implication in Eq. \eqref{eq:spliting} is due to the fact that the assumption \textbf{Existence of marginals} already assumes that $p(c,d|x,y)$ exists ( i.e.,\textbf{Absoluteness of Pseudo Events}). Moreover, the second part of \textbf{Existence of marginals} is that $p(a | x,c)$ and $p(b | y,c,d)$ exists. Then together this implies that there exists a probability $p(a,b|x,y)$ as
\begin{eqnarray}
     p(a,b|x,y)&&=\sum_{c,d}p(c,d|x,y)p(a | x,c)p(b | y,c,d) \nonumber \\
     &&=\sum_{c,d}p(c,d)p(a | x,c,d)p(b | y,c,d)
\end{eqnarray}
which is what \textbf{Absoluteness of Truly-Observed Events} is. The second line of the above equation is due to \textbf{No Retrocausality}. We do not claim that this particular implication will hold generally, but arises from the particular way events, Pseudo Events, and operational accessibility are defined in this Causal Time-Symmetric Friendliness scenario. This is because \textbf{No Retrocausality} is a causal assumption, whereas \textbf{Absoluteness of Truly-Observed Events} and \textbf{Absoluteness of Pseudo Events} are not themselves causal statements, and no general implication between them is expected outside the present framework.

Having shown in Section~\ref{Subsect:WeakeningAOE} that we can remove the \textbf{Existence of Joint Probability} assumption while still being able to generate an inequality which quantum mechanics can violate, we now wish to consider whether we can further remove assumption, ideally to see if \textbf{Absoluteness of Truly-Observed Events} alone is enough to produce an inequality which Quantum Mechanics violates. This question is motivated by \textit{prima facie} \textbf{Absoluteness of Pseudo Events} being violated by standard quantum mechanics, given quantum mechanics describes Charlie's (Debbie's) lab as being in a superposition from Alice's (Bob's) perspective until they open the lab, and given Alice (Bob) can unitarily undo Charlie's (Debbie's) measurement - which requires the measurement itself to have been unitary according to quantum mechanics in the first place. By the term ``standard quantum mechanics'', here we mean the description of quantum mechanics that follows from, or which is mathematically equivalent to, the Copenhagen interpretation \cite{solvay}.

To answer this question, let us consider the case without \textbf{Absoluteness of Pseudo Events}, where we have the following factorised form of observed probabilities, 
\begin{equation}
    p(a,b | x,y)=\sum_{c,d} p(c,d | x,y)\,p(a | x,c)\,p(b| y,c,d).
\end{equation}
which gives,
\begin{equation}
\label{prob:depend}
\begin{split}
&S =\\
&\sum_{c,d}\Big[ p(c,d | 0,0) A_0^{(c)}B_0^{(c,d)}+ p(c,d | 0,1)\,A_0^{(c)}B_1^{(c,d)}\\
& + p(c,d | 1,0) A_1^{(c)}B_0^{(c,d)}- p(c,d | 1,1)\,A_1^{(c)}B_1^{(c,d)}\Big],
\end{split}
\end{equation}

We can no longer factor this as a single convex average $\sum_{c,d} p(c,d)\,S_{c,d}$ as done while establishing inequality Eq.~\eqref{chsh}. The pointwise bound $|S^{c,d}|\leq 2$ for fixed $c,d$ does not, by itself, constrain mixtures with different context-dependent weights. As a result, the overall value of $S$ derived without assuming \textbf{Absoluteness of Pseudo Events} may even exceed the quantum bound, undermining the no-go theorem.

To prove this, consider the specific case where $c,d\in\{0,1\}$ and $A_x^{(c)},B_y^{(c,d)}\in\{\pm1\}$ the deterministic responses and set
\begin{align}
p(c,d| x,y) = \begin{cases}
1 & \text{if } (c,d)=(0,0)\\
& \text{ and } (x,y)=(0,0),\\[2pt]
1 & \text{if } (c,d)=(0,1)\\
& \text{ and } (x,y)=(0,1),\\[2pt]
1 & \text{if } (c,d)=(1,0)\\
&\text{ and } (x,y)=(1,0),\\[2pt]
1 & \text{if } (c,d)=(1,1)\\
&\text{ and } (x,y)=(1,1),\\[2pt]
0 & \text{otherwise.}
\end{cases}
\end{align}
Now if we choose responses such that $A_0^{(0)}=A_1^{(1)}=B_0^{(0,0)}=B_1^{(0,1)}=B_0^{(1,0)}=+1$ and $B_1^{(1,1)}=-1$, then,
\begin{equation}
  \langle A_0B_0\rangle=\langle A_0B_1\rangle= \langle A_1B_0\rangle=+1, \langle A_1B_1\rangle=-1. 
\end{equation}
which implies \(S=1+1+1-(-1)=4\), violating not only the bound provided in Eq.~\eqref{chsh} but also the quantum bound $S_{Q}=2\sqrt{2}$, reaching the Boxworld bound $S_{B}=4$ \cite{Popescu1994}. This shows that without mediator independence (derived by combining \textbf{Axiological Time Symmetry}+\textbf{No Retrocausality} (Lemma \ref{lem:deb}) with \textbf{Absoluteness of Observed Events}), it is not possible to establish the no-go result.

\section{Discussions and outlook}\label{Sect:Discussions}
\subsection{Comparing to Related Results}\label{Subsect:Comparisons}


Recent works have connected Local Friendliness–type no-go theorems to both forms of contextuality, namely preparation contextuality~\cite{Catani_25} and Kochen–Specker contextuality~\cite{Walleghem2025}. These works demonstrate that different manifestations of quantum correlations can be used to establish the incompatibility of quantum theory with observer-independent events. This relationship can be understood in terms of a hierarchy of correlations that each no-go theorem employs.

In a prepare-and-measure scenario the connection between preparation contextuality and \textbf{Absoluteness of Observed Events} is formulated through what they refer to as the  \emph{Operational Friendliness} no-go theorem. Crucially, Leifer and Pusey proved that, in a timelike prepare–and–measure scenario, imposing \emph{Time Symmetry} together with \emph{No Retrocausality} and \emph{$\lambda$-mediation} yields the mediator/epistemic independence condition (e.g., $p(\lambda|x)=p(\lambda)$ for the no-signalling preparations) usually assumed in preparation-noncontextuality arguments \cite{leifer2017}. Consequently these assumptions together imply a timelike Bell-CHSH constraint which quantum theory violates. In contrast, the Operational Friendliness no-go theorem derives this contradiction with quantum theory by combining \textbf{Absoluteness of Observed Events} with \emph{Operational Agency}- a stronger version of operational equivalence used in the discussion of Spekkens' contextuality\cite{Spekkens2005Contextuality,Catani_25,Walleghem2025}. By using similar operational correlations involving time-ordered measurements, we employ \textbf{Absoluteness of Observed Events} to establish our Causal Time-Symmetric Friendliness no-go result through a different yet equally powerful route. We therefore view our Causal Time-Symmetric Friendliness no-go theorem as complementary to the Operational Friendliness no-go theorem. However, the two are not hierarchically ordered because Time Symmetry and preparation-noncontextuality are logically different principles and can be independent in other scenarios. 

Note that it is worth investigating the difference between the Causal Time-Symmetric Friendliness constraints with the generalised-noncontextuality constraints for the sequential measurement scenario. However, for the bipartite scenario we discuss here, where each party can choose between two dichotomic measurements (the $2222$-Bell scenario), noncontextuality constraints provide the same Bell-CHSH-like inequality. To determine whether our constraints are weaker or stronger than those imposed by noncontextuality, one would need to consider scenarios beyond the $2222$-Bell scenario, in which a Causal Time-Symmetric Friendliness inequality can be violated while inequalities derived from Kochen–Specker contextuality~\cite{Walleghem2025} or preparation contextuality~\cite{Catani_25} cannot, and vice versa. We leave the derivation of such Causal Time-Symmetric Friendliness inequalities beyond the $2222$-Bell scenario for future work.

\subsection{Implications of the findings for interpretational issues}\label{Subsect:Implications}

It is worth examining how the Absoluteness assumptions we discuss above relate to two prominent realist interpretations of quantum theory: the de~Broglie–Bohm (dBB) theory \cite{solvay,Bohm1,Bohm2}, and Everett's Many–Worlds interpretation (MWI) \cite{Everett}.

In the dBB framework, every system is associated with a set of ontic variables (the system's exact position) that evolves under a guiding equation. These variables assign definite outcomes to all measurement events, so both the Truly-Observed Events and the Pseudo Events possess ontic definiteness in every run. In the terminology of this work, dBB explicitly maintains \textbf{Absoluteness of Truly-Observed Events} and \textbf{Absoluteness of Pseudo Events}. However, as the quantum potential is explicitly nonlocal (allowing measurement results to depend on factors outside the past light-cone of the measurement), in the \textbf{Time Symmetric} sector \footnote{In general, dBB theory, like standard quantum theory, does not satisfy Operational Time Symmetry. However, one may consistently restrict attention to a time-symmetric sector of the theory \cite{pirsa_PIRSA:08090068,Schmidt_2019dBBTimeSymmetric}.}, dBB circumvents our no-go results by abandoning the condition of \textbf{Screening via Pseudo Events}.

On the other hand, the Many-Worlds Interpretation (MWI) fundamentally rejects the very notion of single, absolute outcomes. In the MWI, a measurement does not yield a unique, classical result. Rather, it induces a branching of the global wavefunction into dynamically independent components, each realising a distinct record of the possible outcomes. Consequently, the MWI explicitly denies the \textbf{Absoluteness of Observed Events} assumption on which our no-go theorem is based. From the perspective of an MWI adherent, the apparent incompatibility between quantum predictions and the Causal Time-Symmetric Friendliness inequality simply reflects the fact that the \emph{observed events} in any given branch represent only one relative component of a larger, branch-dependent reality. In this view, the violation of AOE at the operational level arises naturally from the multiplicity of branching records in the universal wavefunction, thereby making the Causal Time-Symmetric Friendliness no-go result completely compatible with the logic used in MWI.

\subsection{Summary and future directions}
In this paper, building on recent results, we introduced the \emph{Causal Time-Symmetric Friendliness Paradox}, a time-ordered analogue of Bong \etal's Local Friendliness no-go theorem. To generate this no-go theorem, we replace the usual locality assumption with \emph{Axiological Time Symmetry} (ATS), showing that, when combined with \emph{Absoluteness of Observed Events} (AOE), \emph{No Retrocausality} (NRC), and \emph{Screening via Pseudo Events} (SPE), we obtain a causal Time-Symmetric Friendliness inequality. We then showed that quantum mechanics violates this inequality, demonstrating quantum statistics must be incompatible with at least one of these assumptions. To probe which assumption might be incompatible, we then examine whether AOE in its entirety is essential for this no-go result. We propose a weaker, operational form of AOE that still leads to the inequality that quantum mechanics violates. This result shows that even under relaxed assumptions, quantum theory resists reconciliation with classical notions of absolute events, reinforcing the foundational significance of Wigner’s Friend-type paradoxes in timelike scenarios. However, we finally showed that trying to remove the absoluteness of the individual Pseudo Events $c$ and $d$ undermines our ability to arrive at a contradiction with quantum mechanics, allowing us to not only surpass the maximum quantum value, but reach the maximum possible algebraic value for the correlators for the scenario. This illustrates the crucial role played by the Absoluteness of Pseudo Events in establishing the no-go result.

Our findings opens several directions for the future work, such as extending the weakening the Absoluteness of Observed Events to derive other no-go theorems based on different set of assumptions (e.g. macrorealism \cite{lgi}), as well as exploring experimentally accessible scenarios that could test genuine Causal Time-Symmetric Friendliness violations. A further direction is to relate the present framework to studies of the quantum-to-classical transition. Since the framework of weakening of absoluteness of events specifies when events acquire effective classicality, it may offer an operational perspective on when and how definite outcomes emerge through decoherence \cite{decoherence1,decoherence2} and coarse-grained measurements \cite{coursegrainedKofler}.

\textit{Acknowledgements -} SM thanks Alok Kumar Pan for many insightful discussions on Wigner’s Friend paradoxes over the past years. JRH acknowledges support from a Royal Society Research Grant (RG/R1/251590), and from their EPSRC Quantum Technologies Career Acceleration Fellowship (UKRI1217).

\appendix

\section{Proof of the Causal Time-Symmetric Friendliness No-Go Theorem}\label{App:ProofCFTheorem}

\textbf{Causal factorisation admitting \textbf{No Retrocausality}:} In the forward time order, Charlie measures his quantum state first (obtaining $c$), then Alice chooses $x$ and so sets a value for $a$, then Debbie measures her quantum state (obtaining $d$), and finally Bob chooses $y$ and so sets a value for $b$. 
Thus, the most general causal factorisation for this scenario that admits \textbf{No Retrocausality} can be written as,
\begin{equation}
\label{nrc1}
\begin{split}
    p&(c,a,d,b|x,y) =\\
    &p(c)\; p(a|x,c)\; p(d|c,x,a)\; p(b|c,x,a,d,y).
\end{split}
\end{equation}

\smallskip

\textbf{Consequence of \textbf{Axiological Time Symmetry}:} From Bayes' rule we can write,
\begin{equation}
\begin{split}
p(a,d|c,x)&=p(a|x,c)p(d|c,x,a)\\
&=p(d|x,c)p(a|c,x,d)
\end{split}
\end{equation}

Inserting this into Eq.~\eqref{nrc1} for the forward-time scenario where Charlie-Alice measures before Debbie-Bob, we get,
\begin{equation}
\label{eq:fwd}
\begin{split}
    p&_{\rightarrow}(c,a,d,b|x,y)=\\
    &p(c)p(d|x,c)p(a|c,x,d)p(b|c,x,a,d,y).
\end{split}
\end{equation}

Similarly, for the time-reversed scenario, we have,
\begin{equation}
\label{eq:rev}
\begin{split}
    p&_{\leftarrow}(d,b,c,a|y,x)=\\
    &p(d)p(c|y,d)p(b|d,y,c)p(a|d,y,b,c,x).
\end{split}
\end{equation}

This yields two key conditional independences, proved explicitly as Lemma~\ref{lem:deb} and Lemma~\ref{lem:bob}.

\begin{Lemma}
\label{lem:deb}
If \textbf{Absoluteness of Observed Events}, \textbf{Axiological Time Symmetry} and \textbf{No Retrocausality} hold, then Charlie and Debbie's joint outcome statistics are independent of Alice and Bob's choice of measurements $x$ and $y$, i.e.,
\[
p(c,d|x,y) = p(c,d).
\]
\end{Lemma}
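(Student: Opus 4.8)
The plan is to read off the marginal $p(c,d|x,y)$ from the two causal factorisations already established --- the forward one, Eq.~\eqref{eq:fwd}, and the time-reversed one, Eq.~\eqref{eq:rev} --- and then stitch them together using \textbf{ATS}. First I would marginalise the forward factorisation over $b$ and then $a$: summing Eq.~\eqref{eq:fwd} over $b$ uses $\sum_b p(b|c,x,a,d,y)=1$ and removes Bob's factor, and summing the result over $a$ uses $\sum_a p(a|c,x,d)=1$ and removes Alice's factor, leaving $p_{\rightarrow}(c,d|x,y) = p(c)\,p(d|x,c)$, an expression manifestly independent of $y$. Symmetrically, marginalising the reversed factorisation Eq.~\eqref{eq:rev} first over $a$ and then over $b$ yields $p_{\leftarrow}(c,d|y,x) = p(d)\,p(c|y,d)$, manifestly independent of $x$. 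Note that \textbf{NRC} is already built into Eqs.~\eqref{eq:fwd}--\eqref{eq:rev} through the fact that in each time order the earliest-measured Pseudo Event carries an unconditioned prior ($p(c)$ forward, $p(d)$ backward), since a later free choice cannot act on an earlier outcome.

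Next I would invoke \textbf{ATS} (Assumption~2 of Theorem~\ref{Causal_Friendliness_theorem}), which equates the full forward and reversed joint distributions $p_{\rightarrow}(c,a,d,b|x,y)$ and $p_{\leftarrow}(d,b,c,a|y,x)$; marginalising both sides of this identity over $a$ and $b$ gives $p(c)\,p(d|x,c) = p(d)\,p(c|y,d)$. The left-hand side is independent of $y$ while the right-hand side is independent of $x$, so each side is in fact independent of both $x$ and $y$. Since the forward experiment is the physically realised one, $p(c,d|x,y) = p_{\rightarrow}(c,d|x,y) = p(c)\,p(d|x,c)$, which we have just shown to be a constant function of $(x,y)$; defining $p(c,d)$ to be this common value finishes the argument.

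I do not expect a substantive obstacle once Eqs.~\eqref{eq:fwd}--\eqref{eq:rev} are granted: the crux is simply the observation that a quantity which is $y$-free and, by \textbf{ATS}, forced to equal an $x$-free quantity can depend on neither setting, with everything else being normalisation bookkeeping. The points that do require care are flagging that \textbf{AOE} is precisely what licenses writing the joint $p(c,a,d,b|x,y)$ at all, and that \textbf{ATS} must be applied to these genuine joints rather than to the observed marginals $p(a,b|x,y)$ --- it is the conditioning on the Pseudo Events inside the factorisations that makes the forward/backward comparison do any work.
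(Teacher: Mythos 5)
Your proposal is correct and follows essentially the same route as the paper's own proof: marginalise the forward and time-reversed factorisations over $a,b$ to obtain $p(c)\,p(d|x,c)$ and $p(d)\,p(c|y,d)$ respectively, then use \textbf{ATS} to equate a $y$-free quantity with an $x$-free one, forcing independence of both settings. The only difference is cosmetic --- you make the ``function of $x$ equals function of $y$ implies constant'' step explicit, which the paper leaves slightly implicit in its final chain of equalities.
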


\begin{proof}

From the assumption of \textbf{Axiological Time Symmetry} we have,

\begin{align}
\label{eq:timeS}
     &p_{\leftarrow}(d,b,c,a|y,x)=p_{\rightarrow}(c,a,d,b|x,y)
\end{align}

Summing over $a$, and $b$, Eq.~\eqref{eq:fwd} and \eqref{eq:rev} can be written as,
\begin{equation}
\label{eq:fwd2}
\begin{split}
    p_{\rightarrow}(c,d|x,y)&=p(c) p(d|x,c)=p(c,d|x)
\end{split}
\end{equation}
and
\begin{equation}\label{eq:rev2}
         p_{\leftarrow}(d,c|x,y)=p(d) p(c|y,d)=p(c,d|y)
\end{equation}
respectively.

Finally, combining Eqs.~\eqref{eq:fwd2} and \ref{eq:rev2} with Eq.~\eqref{eq:timeS} gives,
\begin{align}
p_{\rightarrow}(c,d|x,y)=
    p(c,d|y)=p(c,d|x)=p(c,d)
\end{align}

\end{proof}

\begin{Lemma}[(Alice(Bob) cannot depend on Bob's(Alice's) choice of measurement beyond $c,d$]
\label{lem:bob}
If \textbf{Absoluteness of Observed Events}, \textbf{Axiological Time Symmetry} and \textbf{No Retrocausality} hold, then Alice's(Bob's) outcome $a$($b$) is effectively screened off by the Pseudo Events in its causal past; that is, it cannot depend on Bob's (Alice's) choice of measurement, and hence,
\begin{align}
p(a|c,d,x,y ) &= p(a|c,x,d) \\
p(b | y, c, x, d) &= p(b | y, c, d).
\end{align}
\end{Lemma}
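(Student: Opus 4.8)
The plan is to derive both conditional independences directly from the two causal factorisations already established, Eq.~\eqref{eq:fwd} and Eq.~\eqref{eq:rev}, using the assumption of \textbf{ATS} together with Lemma~\ref{lem:deb}. First I would prove the statement for Bob's outcome $b$, since this is the one that appears in the final \textbf{SPE}-style form and follows most transparently from the forward factorisation. Starting from Eq.~\eqref{eq:fwd}, I divide both sides by $p_\rightarrow(c,a,d|x,y)$ to isolate $p(b|c,x,a,d,y)$, and then marginalise appropriately: summing Eq.~\eqref{eq:fwd} over $a$ and using Lemma~\ref{lem:deb} (so that $p(c,d|x,y)=p(c,d)$ and hence the $x$-dependence of the $(c,d)$-marginal drops out) I obtain $p_\rightarrow(c,d,b|x,y)$ in a form from which $p(b|c,d,x,y)$ can be read off. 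The key computation is to show that the resulting expression has no residual $x$-dependence beyond what is already carried by $c$ and $d$; this is where the structure of Eq.~\eqref{eq:fwd}, in which $x$ enters only through the factors $p(d|x,c)$ and $p(a|c,x,d)$ whose product sums to $p(d|c)$ after Lemma~\ref{lem:deb}, does the work. I would then record this as $p(b|y,c,x,d)=p(b|y,c,d)$.

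For Alice's outcome $a$, the argument is symmetric but runs through the time-reversed factorisation Eq.~\eqref{eq:rev} rather than the forward one: in the reversed picture Debbie–Bob are the ``preparers'' and Charlie–Alice the ``measurers'', so $a$ plays the role that $b$ played above. Applying \textbf{ATS} to equate $p_\leftarrow(d,b,c,a|y,x)$ with $p_\rightarrow(c,a,d,b|x,y)$, and then summing over $b$, I can express $p(a|c,d,x,y)$ in terms of the reversed factors $p(a|d,y,b,c,x)$ summed against $p(b|d,y,c)$; invoking Lemma~\ref{lem:deb} again to kill the spurious $y$-dependence of the $(c,d)$ marginal, the residual $y$-dependence cancels, leaving $p(a|c,d,x,y)=p(a|c,x,d)$. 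Throughout, I would be careful that every conditional I write has nonzero conditioning probability, or else pass to the statement ``for all $(c,d)$ with $p(c,d)\neq0$'', which suffices for the expectation-value manipulations in the main proof.

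The main obstacle I anticipate is bookkeeping rather than conceptual: making sure that when I marginalise Eq.~\eqref{eq:fwd} or Eq.~\eqref{eq:rev} and then form a conditional, the cancellation of the unwanted setting ($x$ for Bob, $y$ for Alice) is genuinely a consequence of Lemma~\ref{lem:deb} and not smuggled in. Concretely, the subtle point is that $\sum_a p(d|x,c)p(a|c,x,d)=p(d|x,c)$ still carries an $x$ a priori, and it is precisely Lemma~\ref{lem:deb}'s identity $p(c)p(d|x,c)=p(c,d)$ that removes it; I would flag that this is the one place the previous lemma is essential. A secondary point to handle cleanly is that \textbf{ATS} is stated as an equality of full joint distributions including the Pseudo Events, so when I ``sum over $b$'' to get at $a$ I must do so on both sides of the \textbf{ATS} equality simultaneously, so that the forward and reversed conditionals are being compared consistently; once that is set up, the two displayed equations fall out immediately and the lemma is complete.
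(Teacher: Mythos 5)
There is a genuine gap here: you have paired each factorisation with the wrong party, and the cancellation you rely on does not occur. In the forward factorisation Eq.~\eqref{eq:fwd}, $x$ does \emph{not} enter only through $p(d|x,c)$ and $p(a|c,x,d)$ --- it also sits inside the final factor $p(b|c,x,a,d,y)$, which at this stage of the argument is a completely unconstrained conditional (neither \textbf{SPE} nor anything else has yet been invoked to strip $x$ or $a$ out of it). Summing Eq.~\eqref{eq:fwd} over $a$ and dividing by $p(c,d)$ gives $p(b|c,d,x,y)=\sum_a p(a|c,x,d)\,p(b|c,x,a,d,y)$, which in general still depends on $x$; Lemma~\ref{lem:deb} only removes the $x$-dependence of the $(c,d)$-marginal, not of $b$'s response function. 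Symmetrically, your route to Alice's independence via Eq.~\eqref{eq:rev} fails because the reversed factorisation's last factor $p(a|d,y,b,c,x)$ carries $y$, and summing over $b$ leaves $\sum_b p(b|d,y,c)\,p(a|d,y,b,c,x)$, which still depends on $y$.

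The paper's proof is exactly the mirror image of yours. The variable you must marginalise away is the one whose conditional is the \emph{last} factor in the chain, so that it sums to unity and disappears, leaving only factors whose arguments are fixed by \textbf{NRC} in the relevant time direction. Concretely: for Alice, take the forward factorisation (after the Bayes swap and Lemma~\ref{lem:deb}), $p(c,a,d,b|x,y)=p(c,d)\,p(a|c,x,d)\,p(b|c,x,a,d,y)$, sum over $b$ so the problematic factor vanishes, and divide by $p(c,d)$ to get $p(a|c,d,x,y)=p(a|c,x,d)$ --- the absence of $y$ is guaranteed structurally by \textbf{NRC} in forward time, not by any cancellation. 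For Bob, you must first use \textbf{ATS} to import the time-reversed factorisation $p(c,d)\,p(b|d,y,c)\,p(a|d,y,b,c,x)$ as an expression for the \emph{same} joint, then sum over $a$ (the last factor in that ordering) to obtain $p(b|c,d,x,y)=p(b|d,y,c)$, where the absence of $x$ comes from \textbf{NRC} applied in the reversed order. Your instinct that \textbf{ATS} must be applied to the full joints before marginalising, and your caveat about conditioning on $(c,d)$ with $p(c,d)=0$, are both sound, but the core assignment of which factorisation yields which independence must be swapped for the proof to go through.
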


\begin{proof}

By using Lemma \ref{lem:deb} we can now write the joint probability as,
\begin{equation}
    p(c,a,d,b|x,y)= p(c,d) p(a|c,x,d) p (b|c,x,a,d,y),
\end{equation}
and then summing over $b$ gives,
\begin{equation}
    p(c,a,d|x,y)= p(c,d) p(a|c,x,d).
\end{equation}
Now dividing both sides by $p(c,d|x,y)$ (which by Lemma \ref{lem:deb} is equal to $p(c,d)$) we get,
\begin{align}
    p(a|c,d,x,y )&= \frac{p(c,a,d|x,y)}{p(c,d|x,y)} =p(a|c,x,d)
\end{align}
However, due to \textbf{No Retrocausality} we can always write $p(a|c,x,d)= p(a|c,x)$.

Similarly, for the time-reversed scenario, using the \textbf{Axiological Time Symmetry} assumption (for notational simplicity $'\rightarrow'$ in the subscript is omitted), we can write,

\begin{equation}
    p(c,a,d,b|x,y)= p(c,d) p(b|d,y,c) p (a|d,y,b,c,x),
\end{equation}

Then summing over $a$ and dividing both side by $p(c,d|x,y)$ finally gives,

\begin{align}
    p(b|c,d,x,y)= p(b|c,d,y)
\end{align}

\end{proof}

\smallskip

We have now established two important results in the form of Lemmas~\ref{lem:deb} and~\ref{lem:bob}, which will be used to prove the main result. At this stage, we insert the \textbf{Screening via Pseudo Events} assumption (i.e., $p(b|c,x,a,d,y)=p(b|c,x,d,y) \ \ \forall b,c,x,d,y$) into the causal factorisation (Eq.~\eqref{nrc1}), then inserting Lemmas~\ref{lem:deb} and~\ref{lem:bob}, and \textbf{No Retrocausality} assumption to the modified expression yields
\begin{align}
p(c,a,d,b|x,y) &= p(c,d)\;p(a|x,c) \;p(b|y,c,d)\,.
\end{align}
Marginalising over $c$ and $d$, and writing the probabilities $p(c)p(d|c)=p(c,d)$, gives the observed joint probability distribution as
\begin{equation}
p(a,b|x,y) = \sum_{c,d} p(c,d)\;p(a|x,c)\;p(b|y,c,d).
\label{eq:joint}
\end{equation}

\smallskip

For each fixed pair $(c,d)$, we can define conditional expectation values
\begin{equation}
\begin{split}
    A_x^{c} &:= \sum_{a=\pm1} a\,p(a|x,c) \in [-1,1],\\ 
B_y^{cd} &:= \sum_{b=\pm1} b\,p(b|y,c,d) \in [-1,1],
\end{split}
\end{equation}
and so correlators
\begin{equation}
\begin{split}
\langle& A_{x} B_{y} \rangle = \sum_{a,b=\pm1} ab\,p(a,b|x,y)\\
&= \sum_{c,d} p(c,d)\,\Big[ \sum_a a\,p(a|x,c) \Big]\Big[\sum_b b\,p(b|y,c,d)\Big]\\
&= \sum_{c,d} p(c,d)\,A_x^{c}\,B_y^{cd}.
\end{split}
\end{equation}

\smallskip

For each $(c,d)$, we define the functional,
\begin{equation}
\begin{split}
 S^{cd} &:= A_0^{c} B_0^{cd} + A_0^{c} B_1^{cd} + A_1^{c} B_0^{cd} - A_1^{c} B_1^{cd} \\
 &= A_0^{c}(B_0^{cd}+B_1^{cd}) + A_1^{c}(B_0^{cd}-B_1^{cd}).
\end{split}
\end{equation}

Since $B_y^{cd}\in[-1,1]$, and 
\begin{equation}
\forall u,v\in[-1,1],\,|u+v| + |u-v|\le 2
\end{equation}
we can maximise over $|A_0^{c}|\le 1$, $|A_1^{c}|\le 1$ to get
\begin{equation}
|S^{cd}| \le |B_0^{cd}+B_1^{cd}| + |B_0^{cd}-B_1^{cd}| \le 2.
\end{equation}

Therefore, the observed value of the function $S$ is
\begin{equation}
    \begin{split}
       S&= \sum_{c,d} p(c,d)\; S^{cd}\\
  &= \sum_{c,d} p(c,d)\,\Big[ A_0^{c} B_0^{cd} + A_0^{c} B_1^{cd} + A_1^{c} B_0^{cd} - A_1^{c} B_1^{cd} \Big]\\
  &= \langle A_{0} B_{0} \rangle +\langle A_{0} B_{1} \rangle +\langle A_{1} B_{0} \rangle -\langle A_{1} B_{1} \rangle , 
    \end{split}
\end{equation}
so by Jensen's inequality \cite{Nielsen_Chuang_2010},
\begin{equation}
\label{eq:inequality}
  |S| \;\le\; \sum_{c,d} p(c,d)\,|S^{cd}| \;\le\; \sum_{c,d}p(c,d) \times \,2 = 2.  
\end{equation}

\textbf{Quantum violation:}  
To achieve the maximal violation of the inequality in Eq.~\ref{eq:inequality}, Charlie and Alice must perform measurements of complementary observables, as should Debbie and Bob. Under such choices, the quantum value reaches  
\begin{equation}
  S_{\mathrm{QM}} = 2\sqrt{2} > 2.  
\end{equation}

{For example, one optimal choice of measurement settings is:
\begin{equation}
\begin{aligned}
&\text{Charlie: } A_{0} =\sigma_{z}, \quad &&\text{Alice: } A_{1}= \sigma_{x}, \\
&\text{Debbie: } B_{0}= \frac{\sigma_{z} + \sigma_{x}}{\sqrt{2}}, \quad &&\text{Bob: } B_{1}=\frac{\sigma_{z} - \sigma_{x}}{\sqrt{2}}.
\end{aligned}
\end{equation}}

{On the other hand, the preparation stage corresponds to the eigenstates of Alice's measurements. Explicitly, for input \(x=0\), the preparation ensemble consists of the eigenstates of \(A_0 = \sigma_z\), while for \(x=1\), it consists of the eigenstates of \(A_1 = \sigma_x\), i.e.
\begin{equation}
\rho_{a|x} = \frac{1}{2}(I + a A_x), \ \ \ x=0,1
\end{equation}
Averaging over outcomes yields
\begin{equation}
\rho_A = \sum_a \rho_{a|x} = \frac{I}{2} \quad \forall x,
\end{equation}
showing that the preparation is no-signalling in time. It is proved in \cite{leifer2017} that such no-signalling quantum preparations implies Operational Time-Symmetric quantum behaviour. This ensures operational Time-Symmetry of the construction. Therefore, no theory satisfying \textbf{Absoluteness of Observed Events} and \textbf{Causal Agency} can reproduce all quantum Causal Time-Symmetric Friendliness correlations.}

\section{Causal Time-Symmetric Friendliness inequality from relaxed assumptions: Proof of Theorem \ref{thm:2}}\label{App:ProofTheorem2}

 In order to prove that the factorisation of operational probability $p(a,b|x,y)$ as given in Eq.~\eqref{eq:opem_factorisation}, we recall the operational ingredients posited by \textbf{Operational Pseudo Event Mediation} - specifically,
 \begin{equation}
 \label{eq:SPE}
  p(b | a,c,d,x,y)=p(b | c,d,y)\qquad\forall a,b,c,d,x,y.
\end{equation}

Crucially, we do not assume the prior existence of a Kolmogorovian joint probability measure $p(a,b,c,d| x,y)$ defined by all the four events. Instead construct the probability from the operational marginals that are empirically defined. To this end we define such a operationally constructed distribution for operational joint probability-like object as
\begin{equation}
\label{eq:q_def}
  q(a,b,c,d | x,y) \;:=\; p(c,d | x,y)\; p(a| x,c)\; p(b | y,c,d).
\end{equation}

The right-hand side of Eq.~\eqref{eq:q_def} uses only the operational objects that \textbf{Operational Pseudo Event Mediation} assumes exist.  Therefore, $q$ should be regarded as a \emph{constructed} function of these empirical distributions, rather than as a postulate implying the prior existence of a full joint distribution. However, we expect that for every fixed $(x,y)$, $q(\cdot,\cdot,\cdot,\cdot | x,y)$ is a non-negative function which sums up to unity as each factor constitutes a conditional probability:
    \begin{eqnarray}
      &&\sum_{a,b,c,d} q(a,b,c,d | x,y) \nonumber\\
      &&=\sum_{c,d} p(c,d | x,y)\Big(\sum_a p(a| x,c)\Big)\Big(\sum_b p(b | y,c,d)\Big) \nonumber\\
      &&=\sum_{c,d} p(c,d | x,y)=1.
    \end{eqnarray}

Hence $q(\cdot,\cdot,\cdot,\cdot| x,y)$ is a valid probability distribution. Defining $q$ in
Eq.~\eqref{eq:q_def} is \emph{not} the same as assuming the existence of the previously given Kolmogorovian joint probability. Rather, $q$ is built from marginals and conditionals which are assumed to be empirically defined entities; this is a strictly weaker commitment than postulating a global joint probability a priori.

Now the statistics corresponding to the Truly-Observed Events $a$ and $b$ can be recovered by marginalising $q$ as,
    \begin{eqnarray}
    \label{eq:reconstruct}
      &&\sum_{c,d} q(a,b,c,d | x,y) \nonumber\\
      &&= \sum_{c,d} p(c,d | x,y)\,p(a | x,c)\,p(b | y,c,d) \nonumber\\
      &&:= \widetilde p(a,b | x,y).
    \end{eqnarray}

By construction $\widetilde p(a,b|x,y)$ is the operational prediction obtained by combining the assumed marginals in \textbf{Operational Pseudo Event Mediation}. To prove that $\widetilde p(a,b | x,y)$ is same as $p(a,b|x,y)$ given in Eq.~\eqref{eq:opem_factorisation} we have to prove the independence of $c,d$ from $x,y$. However, we cannot establish this result in the same manner as Lemma~\ref{lem:deb}, since the proof of Lemma~\ref{lem:deb} relies on the assumption of \textbf{Absoluteness of Observed Events}, which we do not use here. Using instead the operational mediation assumption of \textbf{Operational Pseudo Event Mediation}, we write the constructed distribution as in Eq.~\eqref{eq:q_def}. 

Further, \textbf{Axiological Time Symmetry} implies that the forward and time-reversed operational constructions coincide under the exchange $(c,a,x)\!\leftrightarrow\!(d,b,y)$, meaning we can write,
\begin{equation}
\label{eq:q_alt2}
  q(a,b,c,d | x,y)
  = p(c,d | y,x)\,p(b | y,d)\,p(a | x,c,d),
\end{equation}
where the quantities on the right-hand side are again operational marginals defined for the time-reversed experiment. Note that, every object that appears in the algebra above (namely, the factors in $q$ and the equalities given by \textbf{Axiological Time Symmetry}) is either an operational marginal (e.g. $(p(c,d | x,y)$) or an operational conditional (e.g. $p(a|x,c)$, $p(b|y,c,d)$), all assumed to exist by \textbf{Operational Pseudo Event Mediation}. We never invoked any property that would require these to be derived from the same pre-existing global Kolmogorovian probability distribution; rather we constructed $q$ from the operational probabilities.
Equating Eqs.~\eqref{eq:q_def} and~\eqref{eq:q_alt2} gives
\begin{equation}
\label{eq:q_equate}
\begin{split}
    p(c,d | x,y)\,&p(a | x,c)\,p(b | y,c,d) \\
  &= p(c,d | y,x)\,p(b | y,d)\,p(a | x,c,d).
\end{split}
\end{equation}

Summing over $a$ and $b$ we get from Eq.~\eqref{eq:q_equate} $p(c,d|x,y)=p(c,d|y,x)$. Finally, applying \textbf{No Retrocausality} in the forward and time reversed scenario, we get $p(c,d|x,y)=p(c,d|x)$ and $p(c,d|y,x)=p(c,d|y)$. This simply implies that they both are equal to $p(c,d)$, completing the proof.

\end{document}